  \providecommand\BibTeX{{%
    \normalfont B\kern-0.5em{\scshape i\kern-0.25em b}\kern-0.8em\TeX}}}
\setlist[enumerate]{leftmargin=*}  % Normalize left margin in enumerate.
\setlist[itemize]{leftmargin=*}        % Normalize left margin in itemize.
\newtheorem{lemma}{Lemma}
\newtheorem{corollary}{Corollary}
\newcommand{\p}[1]{\left( #1 \right)}
\newcommand{\br}[1]{\left[ #1 \right]}
\newcommand{\NL}[0]{\ensuremath{N_L}}
\newcommand{\NH}[0]{\ensuremath{N_H}}
\newcommand{\nL}[0]{\ensuremath{n_L}}
\newcommand{\nH}[0]{\ensuremath{n_H}}
\newcommand{\nLv}[1]{\ensuremath{n_{L#1}}}
\newcommand{\nHv}[1]{\ensuremath{n_{H#1}}}
\newcommand{\rL}[0]{\ensuremath{r_L}}
\newcommand{\rH}[0]{\ensuremath{r_H}}
\newcommand{\RL}[0]{\ensuremath{R_L}}
\newcommand{\RH}[0]{\ensuremath{R_H}}
\newcommand{\V}[0]{\ensuremath{V}}
\newcommand{\bp}[0]{\ensuremath{b_p}}
\newcommand{\cost}[0]{\ensuremath{g}}
\newcommand{\priceL}[0]{\ensuremath{f_L}}
\newcommand{\priceH}[0]{\ensuremath{f_H}}
\newcommand{\cd}[0]{\cdot}
\newcolumntype{C}[1]{>{\centering\let\newline\\\arraybackslash\hspace{0pt}}m{#1}}
\begin{document}

%%
%% The "title" command has an optional parameter,
%% allowing the author to define a "short title" to be used in page headers.
\title[How Externalities of Size Complicate Notions of Solidarity and Actuarial Fairness]{Better Together?\\ How Externalities of Size Complicate Notions of Solidarity and Actuarial Fairness}

%%
%% The "author" command and its associated commands are used to define
%% the authors and their affiliations.
%% Of note is the shared affiliation of the first two authors, and the
%% "authornote" and "authornotemark" commands
%% used to denote shared contribution to the research.
\author{Kate Donahue}
\affiliation{%
  \institution{Cornell University}}
\email{kdonahue@cs.cornell.edu}

\author{Solon Barocas}
\affiliation{Microsoft Research and Cornell University}
\email{solon@microsoft.com}

%%
%% By default, the full list of authors will be used in the page
%% headers. Often, this list is too long, and will overlap
%% other information printed in the page headers. This command allows
%% the author to define a more concise list
%% of authors' names for this purpose.
\renewcommand{\shortauthors}{Donahue and Barocas}

%%
%% The abstract is a short summary of the work to be presented in the
%% article.
\begin{abstract}
Consider a cost-sharing game with players of different costs: an example might be an insurance company calculating premiums for a population of mixed-risk individuals. Two natural and competing notions of fairness might be to a) charge each individual the same or b) charge each individual according to the cost that they bring to the pool. In the insurance literature, these approaches are referred to as \enquote{solidarity} and \enquote{actuarial fairness} and are commonly viewed as opposites. However, in insurance (and many other natural settings), the cost-sharing game also exhibits \emph{externalities of size}: all else being equal, larger groups have lower average cost. In the insurance case, we analyze model where costs strictly decreases with pooling due to a reduction in the variability of losses. In this paper, we explore how this complicates traditional understandings of fairness, drawing on literature in cooperative game theory.

First, we explore solidarity: we show that it is possible for both groups (high risk and low risk) to strictly benefit by joining an insurance pool where costs are evenly split, as opposed to being in separate risk pools. We build on this by producing a pricing scheme that maximally subsidizes the high risk group, while maintaining an incentive for lower risk people to stay in the insurance pool. Next, we demonstrate that with this new model, the price charged to each individual has to depend on the risk of other participants, making naive actuarial fairness inefficient. Furthermore, we prove that stable pricing schemes must be ones where players have the anti-social incentive desiring riskier partners, contradicting motivations for using actuarial fairness. Finally, we describe how these results relate to debates about fairness in machine learning and potential avenues for future research. 
\end{abstract}

%%
%% The code below is generated by the tool at http://dl.acm.org/ccs.cfm.
%% Please copy and paste the code instead of the example below. 
%%
\begin{CCSXML}
<ccs2012>
   <concept>
       <concept_id>10003752.10010070.10010099.10010100</concept_id>
       <concept_desc>Theory of computation~Algorithmic game theory</concept_desc>
       <concept_significance>500</concept_significance>
       </concept>
   <concept>
       <concept_id>10010405.10010455.10010458</concept_id>
       <concept_desc>Applied computing~Law</concept_desc>
       <concept_significance>500</concept_significance>
       </concept>
   <concept>
       <concept_id>10010405.10010455.10010460</concept_id>
       <concept_desc>Applied computing~Economics</concept_desc>
       <concept_significance>300</concept_significance>
       </concept>
 </ccs2012>
\end{CCSXML}

\ccsdesc[500]{Theory of computation~Algorithmic game theory}
\ccsdesc[500]{Applied computing~Law}
\ccsdesc[300]{Applied computing~Economics}

%%
%% Keywords. The author(s) should pick words that accurately describe
%% the work being presented. Separate the keywords with commas.
\keywords{cooperative game theory, submodular cost function, fair cost sharing, insurance, solidarity, actuarial fairness}

%% A "teaser" image appears between the author and affiliation
%% information and the body of the document, and typically spans the
%% page.

%%
%% This command processes the author and affiliation and title
%% information and builds the first part of the formatted document.
\maketitle

\section{Introduction}

Imagine the following situation: an individual wishes to purchase car insurance and goes to her local insurance company. The insurance company combines her with other people insuring cars of the same value. Within the insurance pool, some people are at low risk of being in an accident and some people at a high risk. The insurance company calculates a total amount of premiums that it needs to collect from the entire set of people: higher risk individuals contribute more to this total than lower risk individuals. How should total premiums be divided into prices for each individual?

One suggestion is to charge people proportional to their risk level, requiring those who bring more cost to the insurance pool to pay more. However, there may be reasons to avoid this approach: what if individuals aren't responsible for their risk status? Consider a case where lower risk people live in suburbs with little traffic, while higher risk people tend to live in dense urban areas --- and consider that these patterns may be due to economic inequality or racial discrimination. In this case, there might be normative reasons to favor dividing total premiums evenly among policy-holders. 

The question of how to divide insurance premiums has been debated extensively in the insurance literature, as we review in Section \ref{literature}. In general, this literature casts these two choices --- proportional pricing (related to the idea of actuarial fairness) or equal pricing (related to the idea of solidarity) --- as strict opposites. However, we show that this idea assumes a fairly simplistic form of premium calculation. In this work, we will explore the impact of a slightly more realistic model: one involving \emph{externalities of size}. 

Simply put, a model exhibits externalities of size when, all else being equal, larger groups have lower average cost. For the insurance example, we will demonstrate in Section \ref{exsize} how insurance pools help reduce variance and thereby bring down total costs in ways that can benefit all members of the pool. However, our results are not limited solely to the insurance case: all of theoretical proofs are given for general cost functions and would apply to any systems exhibiting similar properties.

Externalities of size complicate current debates in the Fairness, Accountability, and Transparency (FAccT) community: when an individual can both be responsible for increasing costs (due to being higher risk) and decreasing costs (due to externalities of size), how should we reason about which price to charge her? Cooperative game theory has analyzed this question, but from a different angle: the questions there focus on how to charge more costly individuals an appropriately greater amount, in keeping with actuarial fairness, whereas the FAccT community is more interested in questions of equity and equality, as championed by advocates of solidarity. 

As mentioned before, our results hold for general cost-sharing situations, but throughout this work we will focus on the application area of insurance. There are two main reasons for this focus: firstly, there is a rich literature in philosophy, law, sociology, and history debating the meaning of fairness in insurance. We draw on this literature to motivate our analysis and use our results to challenge some of the entrenched thinking and suggest potential avenues of future research. The second motivation for focusing on insurance is that there already exists a natural model for premium calculation with externalities of size. In Section \ref{literature}, we describe this model and discuss how it satisfies our desired properties. In this work, we will explicitly \emph{not} analyze health insurance, which is so essential to well-being that many societies treat it as a right.

In later sections, our analysis will show that, due to externalities of size, the distinction between actuarial fairness and solidarity quickly blurs. In Section \ref{solidarity}, we are motivated by cases where externalities of size may make it easier to accomplish solidaristic goals. First, we show that, for some situations, it is possible that evenly splitting premiums ends up producing prices that are strictly lower than what either the low or high risk participants would get in homogeneous pools. We additionally define and analyze a pricing method that allows us to maximally subisidize the higher risk participants while still remaining stable against defections. 

In Section \ref{Actfair}, we turn to actuarial fairness. First, we show that certain properties of actuarial fairness are impossible to simultaneously achieve with externalities of size. For example, a pricing scheme that is indifferent to risk levels of other participants is also one that is inefficient. Additionally, we show that a pricing scheme that is both stable and efficient is one where players have the anti-social incentive of wishing that other players have higher risk. We discuss how these results might have implications for debates in fairness within the insurance literature and more broadly. 

\section{Motivating literature}\label{literature}
%We will begin by reviewing some of the related literature in insurance and game theory, where some of these issues have received a good deal of attention already. 
\subsection{Debates over fairness in insurance: solidarity and actuarial fairness}\label{insurance}

The history of insurance and premium calculation in the Western world is a surprisingly rich and fascinating one. Historically, harmful events were not understood as chance occurrences; rather, \enquote{accidents and diseases were seen as a punishment for objectionable deeds, known only to God and the sinner} \cite{barry2019insurance}. Spreading the burdens of these misfortunes across a larger group of people made no sense under this worldview, as someone's chance of experiencing a misfortune was thought to be entirely under their control \cite{bernstein1996against}. It was only during the industrial revolution that these views started to change, as workplace accidents began to be seen as not necessarily the fault of workers or managers, but as \enquote{inherent to industrialization itself,} leading some to conclude \enquote{that the old rule of responsibility was obsolete}  \cite{barry2019insurance}. 

Instead, insurance allowed multiple people to pool and manage risks communally. A rare but disastrous accident, like a house fire, could be insured against for a fixed sum, with the common result that all participants benefited. In this respect, \citet{lehtonen2011forms} write that insurance should be understood as \enquote{a central yet often inconspicuous infrastructure supporting the Western way of life}. \citet{horan2011actuarial} has likewise argued that insurance played a critical role in helping reshape American life in the post-World War II era. But if insurance increased welfare, it also raised questions of how its costs should be divided into premiums.

The most natural choice might be to simply calculate the necessary total amount of money required to insure a group of people and then divide the cost evenly between them. But as actuarial science grew more sophisticated, it became apparent that certain individuals had a greater chance of suffering a loss than others. Is it really right that they should pay the same amount? 

\citet{arrow1978uncertainty} would answer \enquote{no}: this seminal work defined \emph{actuarially fair} pricing as that in which each participant in the insurance pool pays their expected costs. One motivation for using actuarial fairness is that failing to do so could lead to adverse selection. Adverse selection is the phenomenon where a pricing scheme that charges high and low risk individuals the same amount ends up incentivizing low risk individuals to leave, either forming their own (cheaper) pool under a different insurance company or to leave the insurance market altogether. The remaining insurance pool has higher average risk and is thus more expensive, which may convince the remaining lower risk individuals to leave as well, starting a potential \enquote{death spiral.} (Empirically, adverse selection has proven a less serious threat to insurance markets \cite{siegelman2003adverse} than this theory would suggest.) Beyond adverse selection, we may object to equal pricing because it invites \enquote{moral hazard}: unless policyholders face some financial consequences for doing so, they may engage in needlessly risky behavior. The concerns with moral hazard are both practical and normative. Most immediately, the problem with such behavior is that it raises the total cost that must be borne by the risk pool. But it also provokes objections on the ground that it is nor fair for an individual to saddle other people with the costs of their reckless behavior \cite{mas1995microeconomic}. For this reason, actuarial fairness is sometimes justified on the basis that \enquote{it is unfair for some individuals to bear costs stemming from the actions of others} \cite{landes2015fair}. 

Of course, this view isn't universally held: a different view is \emph{solidarity}, which holds that groups with different risks should nevertheless still pay equal premiums. This view holds that charging different premiums runs counter to the purpose of insurance, which is to spread the costs of events outside an individual's control across the collective. In contrast, actuarial fairness is ultimately indifferent to whether someone is responsible for their risk status: the danger of adverse selection would seem to compel insurers to charge higher-risk policyholders a higher price even if they are at higher relative risk for no fault of their own. A solidaristic view of insurance holds that its function is to help to compensate those who happen to experience misfortune, but also those who happen to be at greater risk of misfortune for reasons outside their control. For example, consider a person who can only afford an older car with a higher chance of being totalled in the event of an accident. We may not wish to charge her a higher premium because the fact that she happens to own an older car is explained by her income and wealth, properties over which she may have limited control. In contrast, we might feel justified in charging her a higher price if her higher risk of totalling her car is the result of reckless driving, an activity over which she does has control. In this sense, solidarity is about helping individuals find ways to cover the costs of each other's misfortunes, including the misfortune of being at greater risk for certain bad outcomes. 

\subsection{Game theory}
Questions of how to divide costs among participants fall into the realm of \emph{cooperative game theory}. This area studies situations where players form coalitions that produce benefit or cost which must be divided among the participants. Key questions center on which coalitions utility-maximizing players have incentives to join and which coalitions are \enquote{stable} against defecting groups of players. 

The seminal works of \citet{shapley1971cores} and \citet{bondareva1963some} analyze certain classes of cost functions and give guarantees for stable cost-sharing schemes. Later works, such as \citet{csoka2012note} and \citet{Balog2014Properties}, specifically use cooperative game theory to analyze situations where risk is shared among coalitions or groups of different actors. However, their work differs from ours because we are considering allocating cost, rather than risk: our cost function is based off of probabilistic factors like the risks $\{r_i\}$, but theirs is a perfectly deterministic function. Closer works to ours include \citet{elastic}, which analyzes \enquote{elastic} cost functions, and \citet{guo2013fair}, an applied example around allocating resources across call centers (which has a cost function structurally similar to ours).

Recent papers have similarly drawn connections between the game theory and the FAccT communities. For example, \citet{hu2020fair} demonstrates that applying strict notions of fairness can reduce the welfare of both relevant groups. \citet{kasy2020fairness} examines similar tensions, showing multiple examples where changes that increase fairness simultaneously decrease equity and welfare. Finally, \citet{finocchiarofairness} provides an overview of points of contact between the mechanism design and FAccT communities. 

Part of this paper's goal is translational: to use the tools developed in cooperative game theory to shed light on avenues of research in the insurance and fairness literature that may not have been considered otherwise. But part of our analysis is fundamentally different from the goals most game theory focuses on. Specifically, much of game theory relies on the idea of \enquote{fairness} as meaning people who bring more cost to the group should pay a larger share of the total cost. For example, \citet{elastic} explores multiple variants of pricing schemes that attempt to enforce higher prices for those who contribute higher costs. However, as discussed in the end of Section \ref{insurance}, in many realistic cases we might say that it is \enquote{fair} to \emph{not} charge a higher-risk person more. This type of analysis is much less explored in the game theory literature.

\section{Externalities of size}\label{exsize}
In this section, we discuss our definition of externalities of size. First, we define the term and give examples of real-world phenomena exhibiting it. Next, in the insurance case, we demonstrate that the literature discussed in Section \ref{literature} assumes a simpler, somewhat unrealistic model of insurance premium calculation. However, we show that a more realistic model, called insolvency-based premium calculation, exhibits externalities of size.  

\subsection{Definition and examples}
In this paper, we consider cases where the total cost associated with a group is a function of the identity of people in that group. In this case, we can consider a set function $\cost(\cd)$ that takes in a set of players $S$ and returns a real number. One very simple example of a $\cost(\cd)$ function is a linear cost-sharing game where the cost of a set of players $S$ is the sum of the costs of the individual players:
$$\cost(S) = \sum_{i \in S}\cost(\{i\}) $$
In this work, we argue that, in many cases, this model is too simplified: it ignores the relevant concept of \emph{externalities of size}. Specifically, this means that the total costs involved in a set of players $S$ is strictly less than the sum of the costs of each player:
$$\cost(S) < \sum_{i \in S}\cost(\{i\}) $$
There are multiple examples of real-world phenomena exhibiting externalities of size:
\begin{itemize}
    \item A multi-state coffee chain is cheaper to operate, in aggregate, than each of the separate locations would be to run individually. 
    \item A restaurant can cook 50 dishes with greater ease and less expense than 50 separate cooks each making one dish. 
    \item For delivery to a large, contiguous area, a single delivery service can be operated with less total cost than multiple delivery services within the same area. 
\end{itemize}
Note that in the cases above, the total savings hold even if individual elements are highly unequal in their contribution to total costs:
\begin{itemize}
    \item For the coffee chain, it may be the case that stores in high-rent cities are much more expensive to operate than ones in small towns. 
    \item Certain dishes may be much more expensive, in time and ingredients, than other dishes. 
    \item Deliveries to remote areas may be more time-consuming and expensive than typical deliveries. 
\end{itemize}
Besides these applied examples, there has also been theoretical work analyzing cost-sharing in a situation with unequal costs but externalities of size. For example, \citet{herzog1997sharing} analyzes cost-sharing in computer networking. In this case, if multiple participants build a network together, average costs per person shrink because users can split costs for portions of the network that they share. 

\subsection{Insolvency-based premiums}
Next, we will analyze the cost function used in insurance literature. For simplicity, we will assume all insurance policies cover the same value $\V$ with a binary loss of probability $r_i$ for individual $i$. 

\citet{meyers2018enacting} describes definitions of actuarial fairness from multiple papers and textbooks, which generally agree that \enquote{a fairly priced insurance policy is one in which the insurance premium is equal to the expected value of the promised insurance payment}. This implies that an actuarially fair pricing scheme would collect total premiums $C$ for a set of players $S$ according to the equation below:
$$C = \sum_{i \in S}\V \cd r_i$$
Those holding a solidaristic view of insurance would disagree on how the total premiums should be divided, but would agree that $C$ corresponds to the correct total value. This cost function is linear, which means that there is a sharp tension between what different people pay. If one person pays less than his expected cost, a different person must pay more than her expected cost in order for the total amount of premiums collected to sum up to the amount needed. 

However, though it does not appear to be discussed in the fairness in insurance literature, there does already exist a model of insurance premiums that exhibits externalities of size: \emph{insolvency-based premiums} \cite{olivieri2015introduction}. 

To motivate this model, it helps to consider what insurance represents. In exchange for paying premiums, a policyholders receives a promise from the insurance company to repay the costs that she suffers in the event of a loss. A bad situation would be if the total value of claims in a given year exceeded the total value of premiums paid. If a \enquote{shortfall} happens, some policyholders may go uncompensated. What is the probability of a shortfall? Denote the random variable describing the total value of claims in a year by $X$. Using the pricing scheme referenced in the insurance literature above, a shortfall occurs whenever $X > \mathbb{E}[X]$, which for a symmetric distribution like our model occurs with 50\% chance! This probability might be unacceptably high --- and another drawback is that it does not depend on the size of an insurance pool: a large pool would be as likely to experience a shortfall as a small one. 

An alternate pricing scheme is called \emph{insolvency-based pricing}. In this model, a premium is collected so that the probability of a shortfall is no more than some fixed $p$: 
$$P(X > C_p) \leq p$$
The total amount of premiums $C_p$ is a function of the probability $p$, which could be viewed as an external requirement, potentially imposed by a regulator, to require that the company maintains some level of financial stability. In this way, multiple insurance companies are assumed to share the same cost function and differ only in their composition of policyholders. 

\subsection{Variance reduction}
For the model of insurance losses we are considering, it is possible to perfectly calculate $C_p$. First, we will assume that policyholders come in two types: low risk and high risk. There are $\nL$ low risk policyholders with risk $\rL$, and $\nH$ high risk policyholders with risk $\rH>\rL$. As before, each is insuring a good of value $V$. 

The total number of claims coming from the low risk participants is distributed according to a binomial distribution with parameters $\nL$ and $\rL$. It vastly simplifies our analysis to approximate this distribution as a normal distribution with expected value $\nL \cd \rL$ and standard deviation $\sqrt{\nL \cd \rL \cd (1-\rL)}$: when the $\nL$ is fairly large, which is common in insurance applications, this is a good approximation.  Similarly, we can approximate the distribution describing the number of claims from the high risk group as a normal distribution with expected value $\nH \cd \rH$ and standard deviation $\sqrt{\nH \cd \rH \cd (1-\rH)}$. The total number of claims in a combined insurance pool is then a normal distribution with mean $\nL \cd \rL +\nH \cd \rH$ and standard deviation $\sqrt{\nL \cd \rL \cd (1-\rL) + \nH \cd \rH \cd (1-\rH)}$. Because each claim has an identical value $\V$, the total value of claims simply scales the mean and standard deviation by $\V$. 

The benefit of using a normal approximation becomes clear: it is extremely straightforward to calculate $C_p$. Given a normal distribution with mean $\mu$ and standard deviation $\sigma$, $C_p$ in the form $\mu + b_p \cd \sigma$ produces the desired premium, where $b_p$ is a constant that depends on $p$ but not $\mu$ or $\sigma$ values. 

These results give the amount of money that insolvency-based premium calculation would need to collect for given population: $\cost(\{\nL, \nH\}, \{\rL, \rH\}) = $
$$\V \p{\rL \cd \nL + \rH \cd \nH + \bp \cd \sqrt{\nL \cdot \rL \cd (1-\rL) + \nH \cd \rH \cd (1-\rH)}}$$
We have slightly abused notation by allowing $\cost(\cd)$ to stand both for the function on $\nL, \nH$ and a set function.

Note that pooling always reduces costs: 
$$ \cost(\{\nL, 0\}, \{\rL, 0\}) + \cost(\{0, \nH\}, \{0, \rH\}) > \cost(\{\nL, \nH\}, \{\rL, \rH\})$$
The expected value component $\V \p{\rL \cd \nL + \rH \cd \nH}$ is the same on both sides. However, the pooled insurance group has lower cost through reduced variance: 
$$\V \cd \bp \p{\sqrt{\nL \cdot \rL \cd (1-\rL)} +  \sqrt{\nH \cd \rH \cd (1-\rH)}}$$
$$> \V \cd \bp \p{\sqrt{\nL \cdot \rL \cd (1-\rL) + \nH \cd \rH \cd (1-\rH)}}$$
The standard deviation shrinks, which reduces the total amount of money that needs to be collected. 

The next lemma strengthens and formalizes these results by showing that the cost function is submodular. 

\begin{restatable}{lemma}{submod}
\label{submod}
For insolvency-based premium calculation, the cost function $\cost(\cdot)$ is submodular. That is, for all sets of players $S$ and $T$, 
$$\cost(S) + \cost(T) \geq \cost(S \cup T) + \cost(S \cap T)$$
The function is strictly submodular: that is, the inequality is strict whenever it is the case that S is not a subset of T and T is not a subset of S, so they both have non-overlapping portions. 
\end{restatable}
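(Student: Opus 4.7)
The plan is to split $\cost(S)$ into its mean component $V\sum_{i\in S} r_i$ and its standard-deviation component $V\bp\sqrt{\sum_{i\in S} r_i(1-r_i)}$, and handle each separately. The mean component is additive in $S$, hence modular, so it satisfies $\cost_{\text{mean}}(S)+\cost_{\text{mean}}(T)=\cost_{\text{mean}}(S\cup T)+\cost_{\text{mean}}(S\cap T)$ trivially. Thus submodularity of $\cost$ reduces to submodularity of the map $S\mapsto \sqrt{\sum_{i\in S} v_i}$, where I set $v_i := r_i(1-r_i)\geq 0$.

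To prove the latter, I would introduce the three disjoint sums
\[
a := \sum_{i\in S\setminus T} v_i,\qquad b := \sum_{i\in S\cap T} v_i,\qquad c := \sum_{i\in T\setminus S} v_i,
\]
all nonnegative. The desired submodularity inequality then becomes
\[
\sqrt{a+b}+\sqrt{b+c}\;\geq\;\sqrt{a+b+c}+\sqrt{b},
\]
which I would rearrange as $\sqrt{a+b}-\sqrt{b}\geq \sqrt{a+b+c}-\sqrt{b+c}$. Using the rationalization identity $\sqrt{x+a}-\sqrt{x}=a/(\sqrt{x+a}+\sqrt{x})$, both sides equal $a$ divided by a positive quantity, and the right-hand denominator $\sqrt{a+b+c}+\sqrt{b+c}$ is at least the left-hand denominator $\sqrt{a+b}+\sqrt{b}$. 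This is exactly the diminishing-marginal-returns statement for the concave function $\sqrt{\cdot}$ applied to the nested inputs $b\leq b+c$, so submodularity follows. (Equivalently, one can invoke the standard fact that $S\mapsto \phi(\sum_{i\in S} v_i)$ is submodular for any concave $\phi:\mathbb{R}_{\geq 0}\to\mathbb{R}$ and any nonnegative weights $v_i$.)

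For the strict version, I would observe that whenever neither $S\subseteq T$ nor $T\subseteq S$, both $S\setminus T$ and $T\setminus S$ are nonempty. Since each $r_i\in(0,1)$ (a risk of $0$ or $1$ is not a meaningful insurance case), we have $v_i=r_i(1-r_i)>0$ for every player, so $a>0$ and $c>0$. In this regime the denominators $\sqrt{a+b}+\sqrt{b}$ and $\sqrt{a+b+c}+\sqrt{b+c}$ are strictly unequal, yielding a strict inequality above, which lifts to strict submodularity of $\cost$.

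I do not anticipate a real obstacle here; the only delicate point is bookkeeping the strictness hypothesis, namely that the $v_i$ are strictly positive exactly when the risks lie in the open interval $(0,1)$, which is implicit in the insurance setup but worth stating. Once that is pinned down, the concavity-of-$\sqrt{\cdot}$ argument makes both submodularity and its strict form essentially a one-line computation.
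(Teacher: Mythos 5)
Your proposal is correct and its skeleton matches the paper's: both split $\cost(\cdot)$ into the additive mean term, which contributes an exact equality, and the standard-deviation term $\bp\sqrt{\sum_i v_i}$ with $v_i=r_i(1-r_i)$, so that everything reduces to submodularity of $S\mapsto\sqrt{\sum_{i\in S}v_i}$. Where you diverge is in how that reduction is discharged. The paper proves $\sqrt{a+b}+\sqrt{b+c}\geq\sqrt{a+b+c}+\sqrt{b}$ by squaring both sides, cancelling the matching non-radical terms, comparing the two remaining cross-products, and expanding everything into monomials in the six population counts $n_{L1},\dots,n_{H3}$ and the weights $R_L,R_H$, reading off strictness from which monomials survive the cancellation. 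You instead rationalize, writing each side of $\sqrt{a+b}-\sqrt{b}\geq\sqrt{a+b+c}-\sqrt{b+c}$ as $a$ divided by a denominator and comparing denominators --- the diminishing-returns property of the concave square root. Your route is shorter, produces the strictness condition ($a>0$ and $c>0$) in one line rather than via a case analysis over four monomial products, and generalizes immediately to arbitrarily many risk classes and to any concave $\phi$ in place of $\sqrt{\cdot}$; what the paper's expansion buys is only that it stays entirely within elementary polynomial manipulation. One small point in your favor: you make explicit the hypothesis $r_i\in(0,1)$ needed so that $v_i>0$, without which the strictness clause of the lemma can fail (e.g., if $S\setminus T$ consisted entirely of zero-risk players); the paper's proof relies on the same fact silently when it treats surviving terms such as $R_L^2\, n_{L1}n_{L2}$ as strictly positive.
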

This proof is given in Appendix \ref{App:proofs}. This property is one we will rely on in later sections to demonstrate overall cost-savings. 

\subsection{Model characteristics}

This section contains some additional notes on possible objections to the insolvency-based premium model. First, this model as written assumes that the insurance company doesn't have any stockpile of money it could use as a cushion in case costs are unexpectedly high. In reality, insurance companies would almost surely have such a financial cushion, but it also seems certain that would want to be compensated financially for the opportunity cost of not using this money in other ways. A reasonable solution would be to have the policyholders pay enough in premiums to at least compensate the insurance company for lost interest on the insurance stockpile: such a scheme would produce a premium of the same form, but with a constant in front of the $\bp$.

Secondly, the $\cost(\cd)$ function as written produces an average premium that is strictly higher than the expected value of losses. Some might object to this: why would someone pay more than their expected loss? The key is that an individual purchasing insurance is purchasing a \emph{reduction in the variability of their costs}. For example, consider an individual purchasing insurance in the model above. Without insurance, her loss in each time period would have expectation $r \cd \V$ and standard deviation $\V \cd \sqrt{r \cd (1-r)}$. In the case that $\V$ is large, this standard deviation could be quite large: she might need to establish a costly financial cushion to handle this uncertainty in her losses. If she purchases insurance, her loss each time period is equal to her premium $C_A$---with 0 standard deviation. The ability to have consistency in her losses might be very valuable to her, which explains why she would be willing to pay a premium $C_A$ that is strictly greater than her expected loss\footnote{One clarifying point: it is important to distinguish between the reduction in variance a policyholder purchases when she buys insurance and the reduction in variance that occurs in an insurance pool when more people are added. The first case is a motivating reason why people purchase insurance, but the second case is a reason why larger insurance pools are helpful (and is a major focus of this paper).}.

\section{Motivating Example}\label{motivate}

\begin{table}[]
\begin{tabular}{|c|c|c|c|}
\hline
                                                                                 & \textbf{Total} & \textbf{Low risk} & \textbf{High risk} \\ \hline
\textbf{Separate pools}                                                          & \$22,500           & \$20                  & \$25                \\ \hline
\textbf{\begin{tabular}[c]{@{}c@{}}Pooled: \\ even-split pricing\end{tabular}}        & \$22,500          & \$22.50                  & \$22.50                \\ \hline
\textbf{\begin{tabular}[c]{@{}c@{}}Pooled: \\ proportional pricing\end{tabular}} & \$22,500        & \$20                  & \$25                \\ \hline
\end{tabular}
\caption{Example of pricing with expected-value premiums, for low risk group (2\% chance of suffering loss) and high risk groups (2.5\% chance).}
\label{tab:exp}
    \vspace{-7mm}
\end{table}

\begin{table}[]
\begin{tabular}{|c|c|c|c|}
\hline
                                                                                 & \textbf{Total} & \textbf{Low risk} & \textbf{High risk} \\ \hline
\textbf{Separate pools}                                                          & \$35,741       & \$32.52               & \$38.96           \\ \hline
\textbf{\begin{tabular}[c]{@{}c@{}}Pooled: \\  even-split pricing\end{tabular}}        & \$31,878         & \$31.88               & \$31.88             \\ \hline
\textbf{\begin{tabular}[c]{@{}c@{}}Pooled: \\ proportional pricing\end{tabular}} & \$31,878           & \$28.36               & \$35.40           \\ \hline
\end{tabular}
\caption{Example with solvency-based premium calculation, with same risk levels as above.}
\label{tab:insolv}
    \vspace{-7mm}
\end{table}

\begin{table}[]
\begin{tabular}{|c|c|c|c|}
\hline
                                                                                 & \textbf{Total} & \textbf{Low risk} & \textbf{High risk} \\ \hline
\textbf{Separate pools}                                                          & \$45,024        & \$32.52               & \$57.53            \\ \hline
\textbf{\begin{tabular}[c]{@{}c@{}}Pooled: \\  even-split pricing\end{tabular}}        & \$40,770          & \$40.77               & \$40.77              \\ \hline
\textbf{\begin{tabular}[c]{@{}c@{}}Pooled: \\ proportional pricing\end{tabular}} & \$40,770           & \$27.28               & \$54.26             \\ \hline
\end{tabular}
\caption{Example with insolvency-based premium calculation, where risk level is the same for low risk group, but 4\% for the high risk group.}
\label{tab:insolv_riskier}
\vspace{-8mm}
\end{table}

Consider a scenario with a single insurable loss with value \$1,000: for example, consider insuring a car against total loss. There are 1,000 possible policy-holders: 500 of them have a 2\% chance of suffering the loss (low risk) and 500 have a 2.5\% chance of suffering the loss (high risk).

Table \ref{tab:exp} describes this scenario with expected-value premiums as assumed in much of the insurance literature. The first row describes the premiums collected if low risk and high risk individuals are in separate insurance pools,  potentially at different companies. The second and third rows consider two different ways of pricing premiums when all of the individuals are in the same insurance pool.  Note that the total amount of premiums stays the same in all three situations, which is a feature of the expected value pricing scheme. The second row has even-split pricing, which might be consider solidaristic. Note that the low risk policy-holders pay strictly more and the high risk policy-holders pay strictly less than if they were in separate insurance pools: this is a necessary property of any solidaristic pricing scheme with this model. The third row describes proportional pricing, which might be viewed as an actuarially-fair pricing scheme. Note that both types of policy-holders pay amounts proportional to their risk: in this case, they neither benefit nor are hurt from being in an insurance pool together. 

Next, Table \ref{tab:insolv} analyzes the same scenario, but under the insolvency-based premium calculation with externalities of size as discussed previously. This example uses $\bp = 2$ to give a 2.27\% chance of insolvency. Note that the total cost of insuring all of the individuals is lower when they are pooled together, as opposed to being in separate homogeneous pools. The second row reflects equal pricing: both low and high risk policy-holders see their costs decrease! From the perspective of the low risk policyholders, the decrease in overall costs due to externalities of size outweighs the costs of pooling with a higher risk group. The last row contains the values for proportional pricing, which shows that both types of policy-holders see strictly lower costs than they would get alone. Later sections will describe exactly how proportional pricing is calculated in this model, and will also demonstrate that there always exist pricing schemes where both low and high risk policyholders strictly benefit from being pooled together, regardless of their risk levels.

Finally, Table \ref{tab:insolv_riskier} also shows an example of insolvency-based premium calculation, but when high risk players are even riskier: they each have a 4\% chance of a loss. As expected, the total amount of money that needs to be collected in premiums is higher. Here, even-split pricing still lowers the high-risk policyholder's price, but it increases the low risk policyholder's price compared to being in a separate insurance pool. Surprisingly, under proportional pricing the low risk policyholders pays \emph{less} than when did when the low risk policyholder's risk was lower, at 2.5\%. Later results will show that this kind of anti-social incentive for other policyholders to have higher risks is a necessary feature of pricing policies like this. 

\section{Model and assumptions}
\subsection{Model and terminology}
\textbf{Insurance model}: We consider the case where there are two types of people, low risk or high risk. There are a total of $\NL$ low risk players and $\NH$ high risk players.  Each person is considering purchasing insurance that would cover them completely in case of a specific loss of value \V, where such a loss can happen either zero times or once during the insurance time period. The low risk policyholder has probability $\rL$ of suffering such a loss, while the high risk policyholders has probability $\rH$: such probabilities are perfectly known to all participants, as well as the insurance companies. We will assume that $\rL < \rH < 0.5$, that all losses occur independently of each other, and that the insurance pool is using insolvency-based premium calculation.

\noindent \textbf{General model (beyond insurance)}: We again assume there are two types of people, each with a low or high cost associated: $\rL \in \mathbb{R}_{\geq0}, \rH \in \mathbb{R}_{\geq0}$. The total cost generated by a set of $\nL$ low risk players and $\nH$ high risk players is $\cost(\{\nL, \nH\}, \{\rL, \rH\})$. The cost is monotone and high risk players are more costly: for any $\nL, \nH, \rL, \rH$, the increase in $\cost(\cd)$ from adding a single low risk player is strictly lower than the increase in $\cost(\cd)$ from adding a single high risk player. 

We will assume the function is strictly submodular in $\nL, \nH$, as in Lemma \ref{submod}, and also that it is continuous in $\rL \in [0, \infty), \rH \in [0, \infty)$. Additionally, we will assume that a player with 0 cost contributes nothing to the total cost, implying that  $\cost(\{\nL, \nH\}, \{0, 0\}) = 0$: a pool whose members have 0 cost produces 0 total cost.

\noindent \textbf{Terminology:} We will often refer to the participants in the pooled activity as players, agents, or policyholders. Sometimes we will refer to the groups they form as pools or coalitions. A collection of coalitions is a coalition structure. A coalition structure is \emph{core-stable} if there does not exist a group of player $S$ so that each player would strictly prefer to be in $S$ as opposed to being in their present pool. We will sometimes use the notation $\pi(\nL, \nH)$ to refer to a coalition with $\nL$ low risk players and $\nH$ high risk players. The pool containing all of the players will be called the \enquote{grand coalition} and can also be written $\pi(\NL, \NH)$. We will use $\priceL(\cd), \priceH(\cd)$ to refer to the prices charged to low and high risk players respectively. 

\subsection{Technical assumptions}

One common assumption in arguments about insurance is moral hazard, which relates to the incentives people have to change their risks (or costs, in the general model). By this argument, charging a high-risk individual more will incentivize them to reduce their risks: failing to charge them more would incentivize riskier behavior. This increased cost is then passed on to the entire pool. In this paper, we do \emph{not} assume that the premium an individual is charged influences their risk level. This is not just an assumption of convenience, but is motivated by consideration of what we should consider \enquote{voluntary}. Some actions that could reduce risk of a loss could be extremely costly --- and some people will be better positioned to incur these costs than others. For example, supplemental driving lessons, beyond those required by the state, likely reduce the risk of an accident, but they impose additional costs on drivers that not everyone will be able to incur. Within this paper, we are assuming that an individual's risk is either immutable or that the cost of changing the risk is prohibitively expensive.

We will also assume that pricing must be \emph{efficient}: the total amount paid by all members in a pool must sum up to the total amount required as calculated by the cost function. There are situations where this might not be true, where there are savings, for example. However, omitting the efficiency assumption makes it very hard to say anything technical: when the price can be completely unrelated to the cost needed, it is not possible to guarantee anything about the prices. We will also assume that prices can depend only on the risk of the individual, which implies that all individuals with the same risk must have the same price.

\subsection{Normative assumptions}
In this work, we necessarily make a number of normative assumptions alongside our technical assumptions \cite{cooper2020normative}. We focus our attention on insurance products that provide value to policyholders and to the world (excluding insurance of harmful or objectionable activities). We will assume that being denied insurance has a negative impact on their life, either through direct loss of insurance or loss of essential goods afforded by insurance. For example, someone denied home-owners insurance loses both the insurance as well as potentially the ability to get a mortgage \cite{heimer2002insuring}. As mentioned earlier, we explicitly do \emph{not} analyze health insurance, which is so essential to well-being that many societies treat it as a right.

 We thus subscribe to the normative belief that a pricing scheme that induces participation in the insurance market is preferable to other arrangements in which agents might have rational incentives to opt out completely. We further assume that arrangements that maximize welfare (by reducing overall costs while providing the same level of value to policyholders) are normatively preferable to others: in particular, this means that we prefer cases where all policyholders are in the same insurance pool (the grand coalition), when possible. And we take as a given, for reasons described earlier, the desirability of an arrangement that maximally subsidizes high-risk agents, while ensuring the stability of the grand coalition, as this serves the twin normative goals of reducing inequality and obtaining the collective welfare gains from economies of size. While individual insurers might be motivated to adopt such a scheme with profit-maximization in mind (because such a scheme would help ensure that agents are not convinced to join an insurance pool run by a competitor), this is not our primary concern.

We also note that setting premiums is only one of many possible mechanisms that could be used to achieve policy goals. For example, even if our analysis says that a certain pricing scheme is \enquote{infeasible,} it could still be the case that, for example, tax subsidies and redistribution could achieve an equivalent result. In fact, our analysis will reveal when it is necessary to consider such alternatives.

\section{Solidarity under externalities of size}\label{solidarity}

So far, we have introduced a model for calculating the total costs created by a group of individuals. Here, we take a solidaristic perspective in how we might divide this cost.  

The first section implements \enquote{even split} pricing, where both players pay the same amount. This approach most closely matches what advocates of solidarity might suggest. Interestingly, we show that sometimes even-split pricing can strictly benefit \emph{both} the low risk and high risk players financially. However, there are also situations where even-split pricing is too aggressive and ends up hurting both the low risk and high risk players.

Next, the second section explores a more flexible notion of fairness: one where we minimize the cost paid by the \emph{more} expensive high risk participants, while maintaining stability. This pricing scheme might be useful in cases where we wish to financially support high risk players as much as we can without causing low risk players to wish to defect. Finally, the last section applies these results specifically to the insurance premium case and discusses the implications for the fairness debate.

\subsection{Even-split price}
Consider the \emph{even-split} pricing scheme defined below. 

\begin{definition}\label{def:evensplit}
With \emph{even-split} pricing, both the high and low risk participants pay the same amount: $\frac{\cost(\{\nL,\nH \}, \{\rL, \rH\})}{\nL + \nH}$. 
\end{definition}

As mentioned before, this pricing scheme follows a natural philosophy of solidarity: all participants should pay the same amount. For a cost function $\cost(\cd)$ that is linear, as is assumed in much of the literature on insurance, such a pricing scheme must strictly hurt the low risk players and strictly help the high risk players. For a submodular cost function, the result is more complicated. The lemma below describes a situation where even-split pricing makes the grand coalition core-stable: that is, no subgroup of players wishes to deviate and form their own pool. 

\begin{restatable}{lemma}{evenstab}
\label{evenstab}
With even-split pricing, if the below inequality is satisfied, then the grand coalition ($\pi(\NL, \NH)$) is core-stable.
$$\frac{\cost(\{\NL,\NH \}, \{\rL, \rH\})}{\NL + \NH} < \frac{\cost(\{\NL,0 \}, \{\rL,0\})}{\NL }$$
\end{restatable}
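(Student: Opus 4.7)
The plan is to reformulate core-stability as an algebraic inequality, then exhaust all potential blocking sub-coalitions using two consequences of submodularity. Under even-split pricing, every member of the grand coalition pays $p := \cost(\{\NL, \NH\}, \{\rL, \rH\})/(\NL + \NH)$. A sub-coalition $\pi(n_L, n_H)$ with $(n_L, n_H) \neq (0,0)$, $n_L \leq \NL$, and $n_H \leq \NH$ can make all of its members strictly better off than $p$ iff there exists a budget-balanced pricing in which each member pays less than $p$, equivalently iff $\cost(\pi(n_L, n_H)) < (n_L + n_H)\, p$. So it suffices to prove $\cost(\pi(n_L, n_H)) \geq (n_L + n_H)\, p$ for every admissible sub-coalition.

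Two tools from the structural assumptions drive everything. First, since submodularity forces the same-type marginal sequence $a_k := \cost(\pi(k, 0)) - \cost(\pi(k-1, 0))$ to be non-increasing, the running average $\cost(\pi(m, 0))/m = \frac{1}{m}\sum_{k=1}^m a_k$ is non-increasing in $m$. Second, the ``high-risk is more costly'' assumption gives the swap inequality $\cost(\pi(n_L, n_H)) > \cost(\pi(n_L+1, n_H-1))$ for $n_H \geq 1$, since the marginal of adding a high-risk player to $\pi(n_L, n_H-1)$ strictly exceeds the marginal of adding a low-risk player to that same pool. Iterating the swap yields $\cost(\pi(n_L, n_H)) \geq \cost(\pi(n_L + n_H, 0))$. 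When $n_L + n_H \leq \NL$, this combines with decreasing average and the hypothesis to give $\cost(\pi(n_L, n_H))/(n_L + n_H) \geq \cost(\pi(\NL, 0))/\NL > p$, ruling out every sub-coalition of total size at most $\NL$ (including all pure low-risk ones, where no swap is needed).

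The main obstacle is the regime $n_L + n_H > \NL$, where we cannot swap all high-risk players to low-risk without exceeding the available $\NL$ low-risk slots. There I would swap only until $n_L$ reaches $\NL$, producing the pool $\pi(\NL, k)$ with $k := n_L + n_H - \NL \in \{1, \dots, \NH\}$ and the same total size $\NL + k = n_L + n_H$; the swap inequality then reduces the claim to showing $g(k) := \cost(\pi(\NL, k))/(\NL + k) \geq p$ for all $k \in \{0, 1, \ldots, \NH\}$. To handle this, I would examine the forward difference of $g$, whose sign matches that of $\psi(k) := (\NL + k)\, m(k+1) - \cost(\pi(\NL, k))$, where $m(j) := \cost(\pi(\NL, j)) - \cost(\pi(\NL, j-1))$. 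A short rearrangement gives $\psi(k+1) - \psi(k) = (\NL + k + 1)\,(m(k+2) - m(k+1)) \leq 0$ by submodularity in the high-risk coordinate, so $\psi$ is non-increasing and the forward differences of $g$ switch sign at most once, from non-negative to non-positive. Hence $g$ is either monotone non-increasing or unimodal with an interior peak, so its minimum on $\{0, \ldots, \NH\}$ is attained at an endpoint. Since the hypothesis is exactly $g(0) > p$ and $g(\NH) = p$ by definition of $p$, that endpoint minimum equals $p$, giving $g(k) \geq p$ throughout and completing the proof. The subtlety I expect to be hardest is precisely this monotonicity/unimodality step: the hypothesis is what rules out the pathological monotone-increasing shape for $g$, where $g(0) \leq g(\NH) = p$ would directly contradict the assumed strict inequality.
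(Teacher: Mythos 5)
Your proposal is correct, but it reaches the conclusion by a genuinely different route than the paper. You reduce core-stability to the single inequality $\cost(\pi(\nL,\nH))\geq(\nL+\nH)\,p$, then split into two regimes: for coalitions of total size at most $\NL$ you repeatedly swap a high risk member for a low risk one (valid because the model assumes a high risk player's marginal cost strictly exceeds a low risk player's) and invoke monotonicity of the pure-low running average $\cost(\pi(m,0))/m$; for larger coalitions you swap down to the slice $\pi(\NL,k)$ and show $g(k)=\cost(\pi(\NL,k))/(\NL+k)$ is unimodal, since the sign of its forward difference is that of $\psi(k)=(\NL+k)\,m(k+1)-\cost(\pi(\NL,k))$, which is non-increasing by submodularity in the high-risk coordinate, so the minimum of $g$ sits at an endpoint, and the endpoints are exactly the hypothesis ($g(0)>p$) and the definition of $p$ ($g(\NH)=p$). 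The paper instead proves two monotonicity facts about the even-split price itself --- it is decreasing in $\nL$ unconditionally (via a telescoping decomposition into decreasing marginal terms, with the transition term ordered using $\rH>\rL$, the same assumption your swap uses) and decreasing in $\nH$ conditionally, whenever the average lies below the threshold $\cost(\{\NL,0\},\{\rL,0\})/\NL$ --- and then walks any candidate deviating coalition up to the grand coalition. Your endpoint/unimodality argument buys cleaner bookkeeping: it avoids the implicit iteration in the paper's final case analysis (where one must note that the threshold condition is preserved each time $\nH$ is incremented), and the paper's conditional monotonicity in $\nH$ is recovered implicitly through the single sign change of $\psi$. The paper's route, in exchange, yields monotonicity statements about how the even-split price varies with pool composition that are of independent interest and are reused in its discussion. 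Both arguments rest on the same two structural inputs: strict submodularity and the greater marginal cost of high risk players.
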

The proof is presented in Appendix \ref{App:proofs}.

It is worth pausing to realize why Lemma \ref{evenstab} is useful. The inequality states the grand coalition is stable against the deviation where all the low risk players form their own group in $\pi(\NL, 0)$. The lemma tells us that the inequality implies that something stronger: that the grand coalition is stable against deviations from \emph{every other possible combination} of low risk and high risk players $\pi(\nL, \nH)$.

Overall, this result suggests that there exist situations where both the low risk players and the high risk players benefit financially from solidarity as implemented in even-split pricing. To understand why this happen, it helps to think of there being two countervailing forces: one is that each additional person increases total costs, but the other is that, through submodularity, they may produce cost savings. When the cost savings outweigh the cost increases, it may be possible for both groups to benefit from even-split pricing. 

However, this is not always the case. Corollary \ref{negative} states a more pessimistic implication: there exist cases where even-split pricing financially hurts the high risk participants that it aims to help. 

\begin{corollary}\label{negative}
If the inequality in Lemma \ref{evenstab} does not hold, then the grand coalition will not be stable: the low risk players have an incentive to defect to $\pi(\NL, 0)$, where they will pay a lower amount. 
\end{corollary}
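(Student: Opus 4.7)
The plan is to prove this as an almost direct contrapositive-style observation from Definition \ref{def:evensplit}, rather than invoking the machinery of Lemma \ref{evenstab}. Suppose the inequality of Lemma \ref{evenstab} fails, so that
$$\frac{\cost(\{\NL,\NH\}, \{\rL, \rH\})}{\NL + \NH} \;\geq\; \frac{\cost(\{\NL,0\}, \{\rL,0\})}{\NL}.$$
By Definition \ref{def:evensplit}, the left-hand side is precisely $\priceL$ under even-split pricing in the grand coalition $\pi(\NL,\NH)$, and the right-hand side is precisely the even-split price every low risk player would pay in the all-low-risk pool $\pi(\NL,0)$. So low risk players, taken as a deviating group $S$ consisting of all $\NL$ of them, would each pay a (weakly) smaller price by seceding and forming $\pi(\NL,0)$.

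The next step is to argue this gives strict defection incentive, as claimed. The natural reading of ``the inequality does not hold'' is that it is strictly reversed; in that case the low risk players strictly prefer $\pi(\NL,0)$ and the grand coalition is not core-stable in the strict sense used in Section~5. (If one only assumes weak failure, the best one can say is that there is a weakly improving deviation; this boundary case can be sidestepped because strict submodularity from Lemma \ref{submod} combined with the assumption $\rH > \rL$ rules out exact equality in generic parameter settings.)

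The main ``obstacle,'' to the extent there is one, is simply being careful about what exactly the deviation-target coalition is: core-stability of $\pi(\NL,\NH)$ requires that \emph{no} subset of players has an improving deviation, so one only needs to exhibit a single bad deviation, namely ``all low risk players break off together.'' There is no need to consider other possible deviations or worry about whether high risk players are also better off, since the definition of core-stability fails as soon as one improving coalition is found. Hence a one-line direct comparison of the two even-split prices suffices, and the corollary follows immediately.
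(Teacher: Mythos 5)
Your argument is correct and matches the paper's own (implicit) reasoning: the paper states this corollary without a separate proof, treating it exactly as the direct comparison you make between the grand-coalition even-split price and the even-split price in $\pi(\NL,0)$, with one improving deviation sufficing to break core-stability. The only caveat — that a weak failure (equality) yields only a weakly improving deviation — is a looseness in the statement itself rather than in your argument, and you flag it appropriately.
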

The corollary implies that if even-split pricing is implemented in this situation, the low risk players will leave the grand coalition, leaving the high risk players in the coalition $\pi(0, \NH)$ and paying a price $\frac{\cost(\{0,\NH \}, \{0, \rH\})}{\NH}$. In this way, attempting to enforce solidarity can make both groups worse off. 

This result matches the intuition developed from other analysis. For example, \citet{akerlof1978market} describes how information asymmetry could cause a market to fall apart, even when there were willing sellers and buyers. Here, even-split pricing mimics information asymmetry because it is impossible to distinguish between the low risk and high risk participants. The grand coalition pool falls apart even though it is possible to produce a pricing scheme where both low risk and high risk players benefit from being combined. Similarly, this result matches the analysis in works like \citet{kasy2020fairness} which demonstrated that, in certain situations, enforcing fairness can reduce welfare for both groups.

\subsection{Max-subsidy}

In this section, we explore the following pricing scheme that aims to help subsidize high risk player's price as much as possible, while still ensuring low risk players have an incentive to participate in the grand coalition. As mentioned before, this is a more flexible notion of solidarity than even-split pricing: we will explore it as a complement to the results we derived there. First, we define the pricing scheme. 

\begin{definition}\label{def:maxsub}
\emph{Max-subsidy pricing} follows this pricing policy: 
$$\priceL(\{\nL,\nH \}, \{\rL, \rH\}) = \frac{\cost(\{\nL,0 \}, \{\rL, 0\})}{\nL}$$
$$\priceH(\{\nL,\nH \}, \{\rL, \rH\}) = \frac{\cost(\{\nL,\nH \}, \{\rL, \rH\}) - \cost(\{\nL,0 \}, \{\rL, 0\})}{\nH}$$
\end{definition}
First, we will show that with this pricing scheme, high risk players most prefer being in the grand coalition, while low risk players most prefer being with as many other low risk players as possible (and do not care about the presence of high risk players). The proof is given in Appendix \ref{App:proofs}. The corollary shows that this implies that the grand coalition is core-stable.

\begin{restatable}{lemma}{maxsubpref}
\label{maxsubpref}
For max-subsidy pricing, $\priceL(\cd)$ is decreasing in $\nL$ and constant in $\nH$, $\priceH(\cd)$ is decreasing in both $\nL$ and $\nH$. 
\end{restatable}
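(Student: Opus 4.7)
The plan is to derive all four monotonicity claims from the submodularity of $\cost(\cdot)$ established in Lemma~\ref{submod}, together with the elementary fact that a non-increasing sequence has non-increasing running averages. No appeal to the closed-form insurance expression is needed, since the argument is purely structural.

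For $\priceL$, the definition gives $\priceL(\{\nL,\nH\},\{\rL,\rH\}) = \cost(\{\nL,0\},\{\rL,0\})/\nL$, so constancy in $\nH$ is immediate. For the dependence on $\nL$, I would set $a_k = \cost(\{k,0\},\{\rL,0\})$ and look at marginals $m_k = a_k - a_{k-1}$. Applying submodularity to nested all-low-risk pools (take $S = A \cup \{i\}$, $T = B$ with $A \subset B$) shows that $m_k$ is non-increasing in $k$. Since $a_n/n$ is the arithmetic mean of $m_1,\ldots,m_n$, appending the smaller term $m_{n+1}$ can only lower the average, so $\priceL$ is non-increasing in $\nL$, and strictly so by the strict-submodularity clause.

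For $\priceH$, decreasing in $\nH$ follows from the same template. With $\nL$ fixed, let $c_k = \cost(\{\nL,k\},\{\rL,\rH\}) - \cost(\{\nL,k-1\},\{\rL,\rH\})$; these marginals are non-increasing in $k$ by submodularity, and $\nH \cdot \priceH = \sum_{k=1}^{\nH} c_k$, so $\priceH$ is again an average of a non-increasing sequence in $\nH$. Decreasing in $\nL$ requires a different argument. Writing $h(\nL) = \cost(\{\nL,\nH\},\{\rL,\rH\}) - \cost(\{\nL,0\},\{\rL,0\})$, the forward difference $h(\nL+1) - h(\nL)$ equals the marginal cost of adding a single low-risk player to the pool $\{\nL,\nH\}$ minus the marginal of adding the same player to $\{\nL,0\}$. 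Since $\{\nL,0\} \subset \{\nL,\nH\}$, submodularity makes this difference non-positive, and strictly negative whenever $\nH \geq 1$. Dividing by the fixed $\nH$ preserves the sign, giving the claim.

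The main conceptual obstacle is recognizing that three of the four claims (constancy in $\nH$ and decreasing in whichever count appears in the numerator of the relevant average) are instances of the standard ``submodular implies concave along an axis'' pattern, while the fourth---decrease of $\priceH$ in $\nL$---is a cross-axis comparison of marginals of adding a low-risk player to two nested pools. Once the cross-axis form is written out as the rearranged difference above, it is a one-line application of submodularity; but noticing that this claim is not of the same type as the others is the only real subtlety in the proof.
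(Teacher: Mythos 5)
Your proposal is correct and follows essentially the same route as the paper's proof: telescoping the cost into marginal increments, using submodularity (diminishing marginal returns) to show those increments are decreasing, and treating each price as a running average of a decreasing sequence, with the decrease of $\priceH(\cd)$ in $\nL$ handled by the same cross-comparison $\cost(\{\nL,\nH\},\cd) - \cost(\{\nL,0\},\cd) > \cost(\{\nL+1,\nH\},\cd) - \cost(\{\nL+1,0\},\cd)$. No substantive differences to report.
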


\begin{corollary}
Assume a submodular cost function $\cost(\cd)$ with max-subsidy pricing. The \enquote{grand coalition} where all players are in the same insurance pool is core-stable. 
\end{corollary}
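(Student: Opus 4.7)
The plan is to establish core-stability via a standard budget-balance argument: for every candidate deviating coalition containing $\nL \leq \NL$ low risk and $\nH \leq \NH$ high risk players, I will show that the total amount its members currently pay in the grand coalition is at most the cost of their alternative pool, i.e.
\[
\nL \cd \priceL(\{\NL, \NH\}, \{\rL, \rH\}) + \nH \cd \priceH(\{\NL, \NH\}, \{\rL, \rH\}) \leq \cost(\{\nL, \nH\}, \{\rL, \rH\}).
\]
This suffices for core-stability: any efficient re-pricing inside the hypothetical deviating pool must sum exactly to $\cost(\{\nL, \nH\}, \{\rL, \rH\})$, so if the displayed inequality holds, no such re-pricing can strictly undercut every member's current grand-coalition price simultaneously (that would force the sum to drop below $\cost(\{\nL, \nH\}, \{\rL, \rH\})$, contradicting efficiency).

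To prove the inequality, I would re-apply max-subsidy pricing to the hypothetical deviating coalition itself and compare term-by-term with the grand-coalition prices. Max-subsidy pricing is efficient by construction, so
\[
\cost(\{\nL, \nH\}, \{\rL, \rH\}) = \nL \cd \priceL(\{\nL, \nH\}, \{\rL, \rH\}) + \nH \cd \priceH(\{\nL, \nH\}, \{\rL, \rH\}).
\]
Lemma~\ref{maxsubpref} states that $\priceL$ is weakly decreasing in the low-risk count (and constant in the high-risk count) while $\priceH$ is weakly decreasing in both counts. Since $\nL \leq \NL$ and $\nH \leq \NH$, each max-subsidy price inside the deviating pool is at least the corresponding grand-coalition price, and summing these term-by-term bounds yields the displayed inequality.

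Edge cases arise when the deviating coalition is entirely low risk or entirely high risk. In those cases the argument reduces to the monotonicity of average cost for a homogeneous pool, which is itself a consequence of strict submodularity of $\cost(\cd)$ (established in Lemma~\ref{submod}) and is already folded into Lemma~\ref{maxsubpref}. I do not anticipate a substantial obstacle: once Lemma~\ref{maxsubpref} is in hand, the corollary is essentially bookkeeping. The most delicate point is simply reconciling the paper's verbal definition of core-stability with the standard budget-balance formulation used above, which I would address in a short opening remark before turning to the inequality.
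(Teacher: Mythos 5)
Your argument is correct, and it reaches the corollary by a slightly different route than the paper. The paper's proof stays entirely inside the fixed max-subsidy rule: it notes, player type by player type, that under Lemma~\ref{maxsubpref} a low risk player's price in any deviating pool with $\nL' \leq \NL$ low risk players is at least her grand-coalition price (and is unaffected by $\nH$), while a high risk player's price is minimized exactly at the grand coalition, so no coalition can make all of its members strictly better off. You instead prove the classical core (budget-balance) condition: summing the efficiency identity $\cost(\{\nL,\nH\},\{\rL,\rH\}) = \nL \cd \priceL(\{\nL,\nH\},\{\rL,\rH\}) + \nH \cd \priceH(\{\nL,\nH\},\{\rL,\rH\})$ with the monotonicity of Lemma~\ref{maxsubpref} gives $\nL \cd \priceL(\{\NL,\NH\},\{\rL,\rH\}) + \nH \cd \priceH(\{\NL,\NH\},\{\rL,\rH\}) \leq \cost(\{\nL,\nH\},\{\rL,\rH\})$ for every $\nL \leq \NL$, $\nH \leq \NH$. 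Both proofs lean on the same key lemma, but your aggregate formulation buys something the paper's proof does not explicitly deliver: stability against \emph{any} efficient re-pricing inside the deviating pool (not only a deviation that re-applies max-subsidy pricing), since a deviation making every member strictly better off would force total payments strictly below the pool's cost. The paper's per-type argument is more economical and matches its definition of core-stability under a fixed pricing rule; yours is marginally longer but yields the stronger, standard cooperative-game statement. Your handling of the homogeneous edge cases ($\nL = 0$ or $\nH = 0$), where the relevant price formula degenerates but the corresponding payment term vanishes, is the right bookkeeping and poses no gap.
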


\begin{proof}
Showing that the grand coalition is core-stable means that there does not exist a group of players $S$ whose members all strictly prefer being together to being in the grand coalition. Any low risk player is indifferent between any arrangement that has $\nL$ low risk players and pays higher cost in any coalition with $\nL'<\nL$, so there is no set $S$ where the low risk players get strictly lower price. High risk players most prefer being with more low risk players and high risk players, so being in the grand coalition is their optimal arrangement. 
\end{proof}

Next, we investigate the \enquote{max} part of max-subsidy pricing: we show that any pricing scheme where the high risk players pay less than max-subsidy is one where the low risk players have an incentive to defect. 
\begin{lemma}
Assume a submodular cost function $\cost(\cd)$. Then, any pricing scheme where the high risk players pay less than max-subsidy (for any given coalition) is one where the low risk players have a group incentive to defect to a homogeneous pool of only low risk players. 
\end{lemma}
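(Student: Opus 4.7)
The plan is to chain together three straightforward facts: the efficiency assumption, the identical-price-per-risk-type assumption, and the definition of max-subsidy pricing. Nothing deeper than bookkeeping is required, because max-subsidy is essentially defined so that the low risk players collectively pay exactly what they would pay on their own --- any discount for the high risk players past that point has to be funded by overcharging the low risk players.

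Concretely, fix a coalition $\pi(\nL,\nH)$ with $\nL,\nH\geq 1$ and suppose some pricing scheme charges each high risk player strictly less than $\priceH$ from Definition~\ref{def:maxsub}. Summing over the $\nH$ high risk players, their total contribution is strictly less than $\cost(\{\nL,\nH\},\{\rL,\rH\}) - \cost(\{\nL,0\},\{\rL,0\})$. By the efficiency assumption, the $\nL$ low risk players together must pay the remainder, so their total payment is strictly greater than $\cost(\{\nL,0\},\{\rL,0\})$. Since prices depend only on risk, every low risk player pays the same amount, and that common price strictly exceeds
\[
\frac{\cost(\{\nL,0\},\{\rL,0\})}{\nL}.
\]

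Now consider the deviation in which exactly those $\nL$ low risk players leave to form the homogeneous pool $\pi(\nL,0)$. In that pool the total cost is $\cost(\{\nL,0\},\{\rL,0\})$, and by efficiency plus identical-price-per-risk-type this must be split evenly, giving each defector a price of $\cost(\{\nL,0\},\{\rL,0\})/\nL$ --- strictly less than what they pay under the hypothesized scheme. Hence the low risk players have a collective incentive to defect to a homogeneous low risk pool, which is what we needed to show.

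The only potential subtlety --- and the one point I would flag in the write-up --- is justifying why the defecting low risk players really do pay $\cost(\{\nL,0\},\{\rL,0\})/\nL$ in the deviating coalition, rather than whatever price the overall scheme might prescribe for that subcoalition. This is handled by noting that in any homogeneous pool the efficiency assumption together with the requirement that equal-risk players receive equal prices forces an even split, regardless of the particular pricing scheme. Everything else is immediate from the definitions, so I do not expect any real obstacle beyond stating the bookkeeping carefully.
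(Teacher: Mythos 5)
Your proof is correct and follows essentially the same route as the paper's: sum the high risk players' underpayment, invoke efficiency to conclude the low risk players are collectively overcharged relative to $\cost(\{\nL,0\},\{\rL,0\})$, and compare to the even split they would get in a homogeneous pool. The one difference is cosmetic --- the paper parametrizes the shortfall by an explicit $\epsilon>0$ per high risk player, while you argue directly with strict inequalities and also spell out why the defecting pool's price is forced to be $\cost(\{\nL,0\},\{\rL,0\})/\nL$, a step the paper leaves implicit.
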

\begin{proof}
Suppose that the high risk players pay $\epsilon > 0$ less than the $\priceH(\cdot)$ max-subsidy price described above. Then, the total amount that the high risk players pay is: 
$$\cost(\{\nL',\nH' \}, \{\rL, \rH\}) - \cost(\{\nL',0 \}, \{\rL, 0\}) - \epsilon \cd \nH'$$
By efficiency, this means that the low risk players must pay: 
$$  \frac{\cost(\{\nL',0 \}, \{\rL, 0\})}{\nL'} + \epsilon \cd \frac{\nH'}{\nL'}$$
which is strictly greater than what they would pay in a group of $\nL'$ low risk players alone, so they have an incentive to defect. 
\end{proof}
This result shows that max-subsidy is the best we can do: it provides a tight lower bound on how much it is possible to subsidize the high risk group without destabilizing the pool. Bounds like this may be helpful for framing the scope of options available for subsidizing a certain group through market mechanisms, though it may be possible to provide stronger subsidizes through non-market mechanisms such as direct tax and subsidies.

\subsection{Implication for insurance application}
In the previous sections, we have shown results for a general cost function $\cost(\cd)$. In this section, we will translate these results into the insolvency-based pricing scheme. For conciseness, we will define $\RL = \rL \cd (1-\rL)$ and $\RH = \rH \cd (1-\rH)$. 

Applying the results of Lemma \ref{evenstab} tells us that an equal split pricing scheme is possible whenever: 
$$\frac{\V \cd \p{\NL \cd \rL + \NH \cd \rH + \bp \cd \sqrt{\NL \cd \RL + \NH \cd \RH}}}{\NL + \NH} $$
$$\leq \frac{\V\cd \p{\rL \cd \NL + \bp \cd \sqrt{\NL \cd \RL }}}{\NL}$$

Applying the results of Lemma \ref{maxsubpref} tells us that the max-subsidy pricing scheme in the insurance case is:
$$\priceL(\{\nL,\nH \}, \{\rL, \rH\}) = \V \p{\rL  + \bp \cd \frac{\sqrt{\RL}}{\sqrt{\nL}}}$$
$$\priceH(\{\nL,\nH \}, \{\rL, \rH\}) = \V\p{\rH + \bp \frac{\sqrt{\nL \RL + \nH \RH} - \sqrt{\nL \RL}}{\nH}} $$

Next, we consider the potential implications of these results on debates around fairness in insurance. As mentioned before, it is important to note that there are two countervailing forces at work: one is the additional costs that each person potentially brings to the pool; the other is the cost-savings that the collective enjoys by enlarging the pool and thereby reducing variance. This creates interesting dynamics: essentially, the submodularity of the cost function $\cost(\cd)$ produces extra \enquote{wiggle room.} In some cases, the benefits of increasing the size of the pool can swamp out the cost of including riskier participants. Under these circumstances, low-risk people are willing to let high-risk people join the pool --- or are willing to remain in the pool if high-risk people join --- if doing so results in lower prices for them, even if everyone is charged the same price. In particular, it seems that the case where solidarity might be easiest to achieve is where $\rH$ is not much larger than $\rL$ and $\NH$ is much larger than $\NL$. To put it simply, size ensures solidarity: the fact that the high-risk group is large enables the even sharing of costs. This runs counter to expectations because it might be reasonable to assume that a coalition needs solidarity --- a willingness to join together, even if it's not utility-maximizing for some --- to build a large coalition. However, the results indicate that the motivation works the other way around.

There are, of course, limits to how much insurers can save in costs by reducing variance and how much these savings can compensate for the difference in risks between groups. In fact, our model gives a precise account for how far we can pursue solidaristic goals before the pool begins to destabilize --- and thus when non-market mechanisms might be necessary to achieve those goals, such as taxation and redistribution policies. Our findings further demonstrate that imposing an even-split pricing scheme on the belief that it serves the goals of solidarity can have the opposite effect by discouraging people from remaining in the pool and actually drive up costs for all the remaining members, including the most price-sensitive. From this perspective, charging different prices --- something that actuarial fairness demands --- can have the effect of ensuring a greater willingness on the part of policyholders to stay in the pool --- that is, to be in solidarity with others and thus collectively enjoy the benefits of variance reduction. It is worth noting that in both cases, we analyzed the stability \emph{given that all insurers in a market follow the the same pricing rule}. This situation might occur in the even-split case if rules are mandated by the government that forbid price discrimination on certain characteristics \cite{avraham2014understanding}.

In this way, the results of this analysis can be helpful as a guidepost for debates around solidarity, explaining when it may be possible to achieve solidaristic goals while still ensuring a stable pricing scheme and when differential pricing can nevertheless serve solidaristic ends. In the next section, we will implement a similar analysis for actuarial fairness.

\section{Actuarial fairness under externalities of size}\label{Actfair}
In the previous section, we described pricing schemes whose goal was to minimize the cost paid by high risk individuals---a solidaristic goal. In this section, we will explore pricing schemes related to the actuarial fairness literature. As a reminder, common themes within this literature revolve around the desire for individuals to pay \enquote{their share} of what they contribute to overall costs. 

In the first subsection, we describe a few pricing schemes that might attempt to satisfy certain properties of actuarial fairness. However, these pricing schemes produce certain anti-social incentives for participants (that is, incentives that go against overall social welfare). In the next subsection, we explore two impossibility results indicating that these undesirable properties are actually necessary, if we wish to have other properties like efficiency. Finally, the last subsection considers the implications of these results in our insurance application.  

\subsection{Pricing schemes for insurance application}
In this subsection, we describe two different pricing schemes that might attempt to satisfy actuarial fairness properties. However, we also note that they have some undesirable properties. 

One well-known way of dividing costs is according to the Shapley value \cite{shapley1971cores}. For a cost-sharing game with $n$ participants and cost function $\cost(\cd)$, the Shapley value would assign a cost to player $i$ according to: 
$$\phi(i) = \frac{1}{n}\sum_{S \in [n]\backslash \{i\}}\frac{\cost(S \cup \{i\}) - \cost(S)}{\binom{n-1}{\vert S \vert}}$$
which can be interpreted as the average increase in cost player $i$ brings to a pool, where the average is taken over all possible pools of participants. It has been proven that the Shapley value is core-stable whenever the cost function is submodular \cite{shapley1971cores,bondareva1963some}, meaning that a pricing policy following the Shapley value is one where no subgroup is incentivized to leave the grand coalition. 

One drawback of the Shapley value is that it is computationally inefficient for large $n$, given that computing its value requires summing over all possible subsets of players. (Even if there are only two risk levels of players, there are still exponentially many ways that they can be arranged.)In this paper, we will use a related, but different pricing scheme we call \emph{proportional pricing}. With this pricing: 
$$\priceL(\{\nL,\nH \}, \{\rL, \rH\}) = \V \cd \p{\rL  + \bp \frac{\RL}{\sqrt{\RL \cd \nL + \RH \cd \nH}}}$$
$$\priceH(\{\nL,\nH \}, \{\rL, \rH\}) = \V \cd \p{\rH  + \bp \frac{\RH}{\sqrt{\RL \cd \nL + \RH \cd \nH}}}$$
It is straightforward to check that this satisfies efficiency. Note that this pricing scheme also has the property that adding another participant strictly reduces the premium any individual pays: for this reason, the grand coalition will minimize costs for both low risk and high risk players, which implies it is core-stable. 

However, this pricing scheme also has two other properties that may be undesirable. First, the price low risk players pay depends on the number and riskiness of the high risk players (and vice versa). This may be undesirable because, by the conception of actuarial fairness, a premium should depend solely on the risk that individual is responsible for. With the insolvency-based pricing, such a property no longer seems reasonable to aim for. In the next section, we show that, given some reasonable assumptions, it is impossible for a strictly submodular cost function $\cost(\cd)$ to give rise to a pricing scheme where players pay prices that are independent of the risks of other participants. It may be useful to note some irony in these results: to keep people from defecting from the insurance pool, which is the goal of actuarial fairness, the insurer needs to set individual premiums in a way that depends on the presence of other people in the pool, exactly what actuarial fairness forbids. 

Secondly, the proportional pricing scheme above has the property that players prefer that their partners are \emph{more} risky: for example, $\priceL(\cd)$ strictly decreases as $\rH$ increases. This produces the anti-social incentive to wish that the risk of other participants increases. Again, in the next section we show that such a property is also a necessary feature of a stable pricing scheme with a submodular pricing function.  
\subsection{Impossibility results}
First, we will show that it is impossible to have prices that are completely independent of the risk of other participants. 

\begin{lemma}\label{threequal}
The following three qualities cannot be achieved simultaneously: 
\begin{enumerate}
    \item Efficiency: $\cost(\cd )=\nL \cd\priceL(\cd ) + \nH \cd \priceH(\cd )$
    \item $\priceL(\{\nL, \nH\}, \{\rL, \rH\})$ is independent of $\rH, \nH$. 
    \item $\priceH(\{\nL, \nH\}, \{\rL, \rH\})$ is independent of $\rL, \nL$.
\end{enumerate}
\end{lemma}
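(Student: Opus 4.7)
The plan is to prove this by contradiction: assume all three properties hold simultaneously and derive a contradiction with strict submodularity of $\cost(\cdot)$ (Lemma \ref{submod} together with the general-model assumption of strict submodularity in $\nL, \nH$). If properties (2) and (3) hold, then $\priceL$ depends only on $(\nL, \rL)$ and $\priceH$ depends only on $(\nH, \rH)$, so by efficiency the cost function is additively separable:
\[
\cost(\{\nL,\nH\},\{\rL,\rH\}) = \nL \cdot \priceL(\nL,\rL) + \nH \cdot \priceH(\nH,\rH).
\]

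The key step is to pick sets $S, T$ on which strict submodularity must hold, and show that this additive separability forces exact equality. Specifically, I would fix a common pool of $\nL$ low risk and $\nH$ high risk players, pick one additional low risk player $a$ and one additional high risk player $b$, and let $S$ be the common pool together with $a$, and $T$ be the common pool together with $b$. Then $S \cup T$ has $\nL+1$ low risk and $\nH+1$ high risk, while $S \cap T$ is the common pool itself with $\nL$ low risk and $\nH$ high risk. Since $a \in S \setminus T$ and $b \in T \setminus S$, neither set contains the other, so strict submodularity applies and gives $\cost(S) + \cost(T) > \cost(S \cup T) + \cost(S \cap T)$.

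However, plugging the separable form into both sides cancels exactly: the left side contributes $(\nL+1)\priceL(\nL+1,\rL) + \nH\priceH(\nH,\rH) + \nL\priceL(\nL,\rL) + (\nH+1)\priceH(\nH+1,\rH)$, which is identical term-for-term to $\cost(S \cup T) + \cost(S \cap T)$ expanded in the same way. So the separable form implies equality, contradicting the strict inequality. Hence the three properties are mutually incompatible.

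I do not anticipate a significant obstacle. The only subtlety is making sure strict submodularity applies to the chosen $S, T$ (it does, because $a$ and $b$ create the required non-overlap), and being careful that the lemma is stated in the language of multisets indexed by $(\nL,\nH)$ while submodularity is stated for sets of players --- but since the cost function only depends on the counts $\nL, \nH$ of each risk type, any set realization of those counts works equally well.
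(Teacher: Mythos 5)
Your proof is correct, but it takes a genuinely different route from the paper's. The paper exploits the independence of the prices from the \emph{risk} arguments: it sets $\rH = 0$ (resp.\ $\rL = 0$), uses the normalization $\cost(\{\nL,\nH\},\{0,0\}) = 0$ and the assumption that zero-risk players contribute nothing, and concludes that efficiency forces $\cost(\{\nL,\nH\},\{\rL,\rH\}) = \cost(\{\nL,0\},\{\rL,0\}) + \cost(\{0,\nH\},\{0,\rH\})$, i.e.\ the cost splits additively across the two homogeneous subpools --- contradicting strict submodularity applied to the partition of the grand coalition into its low-risk and high-risk parts. You instead exploit the independence from the \emph{count} arguments $\nL, \nH$: efficiency then makes $\cost$ additively separable as $\nL\cd\priceL(\nL,\rL) + \nH\cd\priceH(\nH,\rH)$, and you contradict strict submodularity on the pair $S, T$ differing by one low-risk and one high-risk player, where the separable form cancels term-for-term. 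Your choice of $S$ and $T$ does satisfy the strictness condition of Lemma~\ref{submod} (one low-risk player in $S\setminus T$ and one high-risk player in $T\setminus S$), and the argument goes through whenever $\NL, \NH \geq 1$ so that such a pair exists. What each approach buys: yours avoids evaluating the cost function at risk $0$ and does not need the $\cost(\{\cd\},\{0,0\})=0$ normalization or continuity in the risks, so it shows that independence from the \emph{sizes} alone already clashes with efficiency; the paper's version shows that independence from the \emph{risks} alone (at fixed pool sizes) already clashes with efficiency. The two arguments are complementary, and together they make the impossibility somewhat stronger than either one individually.
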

\begin{proof}
In proving this, we will assume that the second and third properties hold, and use it to show a violation of efficiency. For conciseness, we will drop the number of players $\nL, \nH$, which are held constant in the equations, so that $\cost(\{\nL, \nH\}, \{\rL, \rH\}) = \cost(\rL, \rH) $. 

First, we consider a case where the high-risk player's cost is 0. By efficiency, we must have: 
$$\nL \cd \priceL(\rL, 0) + \nH \cd\priceH(\rL, 0)  = \cost(\rL, 0)$$ 
By property 2, $\priceH(\rL, 0) = \priceH(0, 0)$, so the equation can be rewritten as: 
$$\nL \cd \priceL(\rL, 0)  =  \cost(\rL, 0)- \nH \cd\priceH(0, 0)$$ 
Similarly, considering the case where the low risk player's cost is 0 gives the equation: 
$$\nH \cd \priceH(0, \rH)  =  \cost(0, \rH)- \nH \cd\priceL(0, 0)$$ 
We can then consider the case where both players have nonzero cost: we will show a violation of efficiency. The total amount that the players pay is
$$\nL \cd \priceL(\rL, \rH) + \nH \cd\priceH(\rL, \rH)$$
By properties 2 and 3 and previous equations: 
$$\nL \cd \priceL(\rL, \rH) = \nL \cd \priceL(\rL, 0) =\cost(\rL, 0)- \nH \cd\priceH(0, 0)$$
Similarly: 
$$\nH \cd \priceL(\rL, \rH) = \nH \cd \priceL(0, \rH) =\cost(0, \rH)- \nL \cd\priceL(0, 0)$$
We can use this to rewrite the equation as: 
$$\nL \cd \priceL(\rL, \rH) + \nH \cd\priceH(\rL, \rH)$$
$$ =\cost(\rL, 0)- \nH \cd\priceH(0, 0)+ \cost(0, \rH)- \nL \cd\priceL(0, 0)$$
We can drop some terms by considering the case where both players have 0 cost: 
$$\nL \cd \priceL(0, 0) + \nH \cd\priceH(0, 0)  = \cost(0, 0) = 0$$
which simplifies the sum down to $\cost(\rL, 0)+ \cost(0, \rH)$. 
In order for efficiency to hold, this implies that we must have:
$$ \cost(\rL, 0)+ \cost(0, \rH) = \cost(\rL, \rH)$$
As a reminder, $\cost(\rL,0)$ is independent of $\nH$. 
The last equation is saying that the cost associated with $\nL$ low risk individuals plus the cost associated with $\nH$ high risk individuals is equal to the cost associated with $\nL$ low risk individuals combined with $\nH$ high risk individuals --- which violates the fact that $\cost(\cd)$ is strictly submodular. 
\end{proof}

Next, we will show that there must exist some anti-social incentives: for example, any pricing scheme that is efficient and stable is one where players would wish their partners to have higher risk. We call this \enquote{anti-social} because it goes against broader social welfare: in this situation, some players would benefit from an increase in the true risk levels of the community.

\begin{lemma}\label{aligned}
For conciseness, we will again drop the number of players $\nL, \nH$, which are held constant in the equations. Assume that:  
$$\lim_{\rH\rightarrow 0}\br{\priceH(\rL, \rH)} = c$$
for some constant $c$. Then, it is not possible to have a pricing scheme that satisfies all three of the following properties: 
\begin{enumerate}
    \item Efficiency: $\cost(\cd )=\nL \cd\priceL(\cd ) + \nH \cd \priceH(\cd )$
    \item Aligned incentives: low risk players prefer that high risk players have lower risks on some interval including 0. 
    $$\frac{d}{d\rH}\priceL(\rL, \rH) >0 \ \forall \rH \in [0, r]$$
    \item Stability: for every level of risk $\rL, \rH$, both low risk players and high risk players benefit by being pooled together. 
\end{enumerate}
Property 2 is stated from the perspective of the low risk player, but the same logic (in the statement and proof) would work if it were stated from the perspective of the high risk player. 
\end{lemma}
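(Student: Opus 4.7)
The plan is to derive a contradiction by showing that efficiency together with low-risk stability and aligned incentives forces $c > 0$, while high-risk stability simultaneously forces $c \leq 0$. Throughout, I rely on the model's convention that a player with $0$ cost contributes nothing (so $\cost(\{\nL, \nH\}, \{\rL, 0\}) = \cost(\{\nL, 0\}, \{\rL, 0\})$ and $\cost(\{0, \nH\}, \{0, 0\}) = 0$) and on the stated continuity of $\cost$ in $\rL, \rH$. The lemma's note that the logic is symmetric means I only need to run the low-risk side.

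First I would fix $\nL, \nH, \rL$ and take the one-sided limit $\rH \to 0^+$ in the efficiency identity $\nL \priceL(\rL, \rH) + \nH \priceH(\rL, \rH) = \cost(\{\nL, \nH\}, \{\rL, \rH\})$. Aligned incentives makes $\priceL(\rL, \cdot)$ differentiable, hence continuous, on $[0, r]$; combined with continuity of $\cost$ and the hypothesized limit $\priceH \to c$, this yields
$$\priceL(\rL, 0) = \frac{\cost(\{\nL, 0\}, \{\rL, 0\})}{\nL} - \frac{\nH c}{\nL}.$$
Now I would apply aligned incentives: for every $\rH \in (0, r]$, strict monotonicity gives $\priceL(\rL, \rH) > \priceL(\rL, 0)$. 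Low-risk stability (Property 3) independently gives $\priceL(\rL, \rH) < \cost(\{\nL, 0\}, \{\rL, 0\}) / \nL$. Chaining these two strict inequalities and substituting the displayed expression for $\priceL(\rL, 0)$ cancels the common term and forces $c > 0$.

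Next I would exploit high-risk stability near $\rH = 0$. Stability requires $\priceH(\rL, \rH) < \cost(\{0, \nH\}, \{0, \rH\})/\nH$ for every $\rH$. By continuity of $\cost$ and the zero-cost convention, the right-hand side tends to $0$ as $\rH \to 0^+$, while the left-hand side tends to $c$. If $c > 0$, then for all sufficiently small $\rH$ we have $\priceH(\rL, \rH) > c/2 > \cost(\{0, \nH\}, \{0, \rH\})/\nH$, contradicting stability. Together with the previous paragraph's conclusion $c > 0$, this is the desired contradiction.

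The main obstacle I anticipate is the careful handling of the limit $\rH \to 0^+$: aligned incentives only constrains the derivative on a neighborhood of $0$, stability is strict only for $\rH > 0$, and we know the limit rather than the value of $\priceH$ at $0$. The argument must therefore use the \emph{strict} monotonicity of $\priceL$ in $\rH$ to bridge between the limiting value at $\rH = 0$ (where inequalities could become tight) and the values $\rH > 0$ where stability is assumed strict, and symmetrically must use strict positivity of $c$ to turn the limiting high-risk comparison into a strict violation at some positive $\rH$.
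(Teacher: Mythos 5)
Your proof is correct and uses essentially the same ingredients as the paper's: the efficiency identity in the limit $\rH \to 0$, the strict monotonicity of $\priceL$ in $\rH$ from aligned incentives, and comparison of each group's pooled price against its homogeneous-pool price. The only difference is organizational --- the paper case-splits on the sign of $c$ and shows stability fails either way, whereas you derive $c > 0$ from low-risk stability and then contradict it via high-risk stability; these are logically equivalent.
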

Note that Lemma \ref{aligned} does not exclude the case where the low risk player's cost might stay constant as $\rH$ increases. However, Lemma \ref{threequal} shows that the low and high risk players cannot both have prices independent of each other's risk. 
\begin{proof}
In this lemma only, we will relax the assumption that $\rL < \rH$ so we are able to examine the limit as $\rH \rightarrow 0$ without requiring that $\rL \rightarrow 0$. For this proof, we assume that the first and second properties hold and use it to show that the third property cannot hold. First, we consider the case where $c \leq 0$. We start with property 1 (efficiency): 
$$ \cost(\rL, \rH)=\nL \cd\priceL(\rL, \rH) + \nH \cd \priceH(\rL, \rH)$$
Next, we take the limit as $\rH \rightarrow 0$ on both sides, which gives: 
$$ \cost(\rL, 0)=\nL \cd \lim_{\rH \rightarrow 0}\br{\priceL(\rL, \rH)} + \nH \cd c$$
Rearranging gives: 
$$\lim_{\rH \rightarrow 0}\br{\priceL(\rL, \rH)} = \frac{\cost(\rL, 0)}{\nL} - \frac{\nH}{\nL}c$$
So, as $\rH $ decreases towards 0, the price low risk players pay goes to something equal to or greater than $$\frac{\cost(\rL, 0)}{\nL} = \frac{\cost(\{\nL, 0\}, \{\rL, 0\})}{\nL}$$
which is the price they would pay in a group of only low risk players. Because of property 2, we know that $\priceL(\cdot)$ is \emph{decreasing} as $\rH$ is decreasing, so $\priceL()$ is strictly greater than the price the low risk players would pay if they were alone. This proves the statement in the case that $c\leq 0$. 

Next, we consider the case where $c>0$. In this case, we'll show that the high risk player has an incentive to defect. By assumption, 
$$\lim_{\rH\rightarrow 0}\br{\priceH(\rL, \rH)} = c > 0$$
However, we also have that:
$$\lim_{\rH\rightarrow 0}\br{\priceH(0, \rH)} = 0$$
So for some $\rH >0$, $\priceH(\cdot)$ is greater than the price it would get in a homogeneous group with only high risk players. Taken together, these cases prove the lemma. 
\end{proof}

\subsection{Implication for insurance application}
Finally, we will consider the implications of these results. First, these results tell us that independence is not the key to stability, as the arguments in favor of actuarial fairness would have us believe. In fact, allowing the premiums charged to one individual to be affected by the presence or absence of other people is how we are able to ensure a stable pool. A truly independent pricing scheme would \emph{over-charge} participants, which would be inefficient. With a submodular cost function, however, prices are being helpfully affected by the reduction in variance that occurs from a larger pool. 

Secondly, these results show that actuarial fairness misaligns incentives. Participants in the pool are incentivized to reduce their own risk --- but they are also incentivized to wish other members of the pool have \emph{higher} risk, which is not socially optimal. This could be seen as moral hazard, but one degree removed: wishing that other people would take on greater risk because it benefits you financially.

These results reveal the need to revisit the conceptual foundations of actuarial fairness in light of externalities of size, especially given that one of the main goals of actuarial fairness is to maintain a large pool of diversified risks --- the very thing that produces these externalities.

\section{Conclusion}
There are a few high-level results that are useful to take away from this work. Overall, a cost-sharing game with externalities of size is one where solidarity and actuarial fairness are not straightforward. The cost savings associated with larger groups can enable prices that strictly benefit both low and high cost individuals. More strongly than that, we give bounds on the lowest stable price we can give to the high cost group, as well as conditions for when even-split pricing is stable. Actuarial fairness also becomes more nuanced: the notions of independence and efficiency of pricing turn out to be at odds with each other. Additionally, requiring efficiency and stability actually produces its own kind of moral hazard. 

Our findings have broader implications for the FAccT community and for those concerned with issues of fairness in insurance. Our study highlights an important dynamic that has been overlooked in the FAccT literature: how we decide to treat one individual often depends on how we have decided to treat others, which is true in cases even beyond traditional ones like scarce resource allocation. This paper has focused on insurance, where the decision to offer insurance to one person at some price affects the terms on which we are able to offer insurance to others, but other domains exhibit similar properties. Existing work on fairness in machine learning has also not yet explored the value of economies of size and how they might ease the challenge of achieving certain equality-oriented notions of fairness. Economies of size help to align the interests of low-risk and high-risk populations and give us more room to maneuver when setting stable solidaristic pricing schemes. But they can also be of value in such domains as credit, where pooling default risk should have similar effects on the interest rates that lenders charge debtors.

More broadly, our results challenge common beliefs in insurance. We show that the stark distinction that people like to draw between solidarity and actuarial fairness in insurance falls apart upon closer examination, largely because people have failed to recognize what can be achieved with variance reduction. This finding calls for more flexible notions of fairness that are able to take into account dynamics produced by different cost functions --- including externalities of size, but also other variants on the cost function beyond what we have proposed in this paper.

%%
%% The acknowledgments section is defined using the "acks" environment
%% (and NOT an unnumbered section). This ensures the proper
%% identification of the section in the article metadata, and the
%% consistent spelling of the heading.
\begin{acks}
We are grateful to Ian Ball, Hoda Heidari, Nicole Immorlica, Jon Kleinberg, and Manish Raghavan for extremely valuable discussions around earlier versions of this work. We would also like to thank the anonymous reviewers, researchers at the New York City lab of Microsoft Research, attendees at the NeurIPS workshop on Consequential Decisions in Dynamic Environments, and Zhuoya Fan for their helpful feedback. 
\end{acks}

%%
%% The next two lines define the bibliography style to be used, and
%% the bibliography file.
\bibliographystyle{ACM-Reference-Format}
\bibliography{sample-base}

%%
%% If your work has an appendix, this is the place to put it.
\clearpage
\appendix

\section{Proofs}\label{App:proofs}
\submod*
\begin{proof}
First, we define $S$ and $T$. Of low risk players, there are $\nLv{1}$ in set S but not set T, $\nLv{2}$ in set $T$ but not set $S$, and $\nLv{3}$ in both set S and set T. Similarly, for high risk players, there are $\nHv{1}$ in set S but not set T, $\nHv{2}$ in set $T$ but not set $S$, and $\nHv{3}$ in both set S and set T. 

For reference, the complete form of the cost function is repeated below: $\cost(\{\nL, \nH\}, \{\rL, \rH\}) = $
$$\V \p{\rL \cd \nL + \rH \cd \nH + \bp \cd \sqrt{\nL \cdot \rL \cd (1-\rL) + \nH \cd \rH \cd (1-\rH)}}$$

First, we can note that the $\V$ term is a constant: for simplicity, we can drop it. Next, we will show that the inequality holds for the linear component of $\cost(\cdot)$: in fact, it is an equality.

Focusing on the linear terms, the lefthand side of the inequality becomes: 
$$(\nLv{1} + \nLv{3}) \cd \rL + (\nHv{1} + \nHv{3}) \cd \rH  + (\nLv{2} + \nLv{3}) \cd \rL+ (\nHv{2} + \nHv{3}) \cd \rH $$
The righthand side becomes: 
$$(\nLv{1} + \nLv{2} + \nLv{3}) \cd \rL + (\nHv{1} + \nHv{2} + \nHv{3})\cd \rH + \nLv{3} \cd \rL + \nHv{3} \cd \rH$$
These sides are equal. 

Next, we look at the square root portion of $\cost(\cd)$. Again, the $\bp$ term is a constant that we drop for simplicity. For conciseness, we use the shorthand of $\RL = \rL \cd (1-\rL)$ and $\RH = \rH\cd (1-\rH)$. 

The lefthand side of the inequality becomes: 
$$\sqrt{(\nLv{1} + \nLv{3}) \cd \RL + (\nHv{1} +\nHv{3}) \cd \RH} $$
$$+ \sqrt{(\nLv{2} + \nLv{3}) \cd \RL + (\nHv{2} + \nHv{3}}) \cd \RH$$
The righthand side becomes: 
$$\sqrt{(\nLv{1} + \nLv{2} + \nLv{3}) \cd \RL + (\nHv{1} + \nHv{2} + \nHv{3}) \cd \RH} $$
$$+ \sqrt{\nLv{3} \cd \RL + \nHv{3} \cd \RH}$$
Next, we square both sides. The lefthand side becomes: 
$$ (\nLv{1} + \nLv{3}) \cd \RL + (\nHv{1} + \nHv{3}) \cd \RH + (\nLv{2} + \nLv{3}) \cd \RL + (\nHv{2} + \nHv{3}) \cd \RH$$
$$+ 2\sqrt{(\nLv{1} + \nLv{3}) \cd \RL + (\nHv{1} + \nHv{3}) \cd \RH} $$
$$\cd \sqrt{(\nLv{2} + \nLv{3}) \cd \RL + (\nHv{2} + \nHv{3}) \cd \RH}$$
The righthand side becomes: 
$$(\nLv{1} + \nLv{2} + \nLv{3}) \cd \RL + (\nHv{1} + \nHv{2} + \nHv{3}) \cd \RH + \nLv{3} \cd \RL + \nHv{3} \cd \RH$$
$$+ 2\sqrt{(\nLv{1} + \nLv{2} + \nLv{3}) \cd \RL + (\nHv{1} + \nHv{2} + \nHv{3}) \cd \RH}$$
$$\cd  \sqrt{\nLv{3} \cd \RL + \nHv{3} \cd \RH}$$
The terms without the square root are the same on each side, so we can drop them. Then, the inequality we are trying to show is:
$$2\sqrt{(\nLv{1} + \nLv{3}) \cd \RL + (\nHv{1} + \nHv{3}) \cd \RH} $$
$$\cd  \sqrt{(\nLv{2} + \nLv{3})\cd \RL + (\nHv{2} + \nHv{3}) \cd \RH}$$
$$\geq  2\sqrt{(\nLv{1} + \nLv{2} + \nLv{3}) \cd \RL + (\nHv{1} + \nHv{2} + \nHv{3}) \cd \RH}$$
$$\cd  \sqrt{\nLv{3} \cd \RL + \nHv{3} \cd \RH}$$
which is equivalent to showing: 
$$\p{(\nLv{1} + \nLv{3}) \cd \RL + (\nHv{1} + \nHv{3}) \cd \RH} $$
$$\cd \p{(\nLv{2} + \nLv{3}) \cd \RL + (\nHv{2} + \nHv{3}) \cd \RH}$$
$$\geq \p{(\nLv{1} + \nLv{2} + \nLv{3}) \cd \RL + (\nHv{1} + \nHv{2} + \nHv{3}) \cd \RH}$$
$$\cd  \p{\nLv{3} \cd \RL + \nHv{3} \cd \RH}$$
Expanding the lefthand side gives us: 
$$\RL^2 \cd (\nLv{1} + \nLv{3}) \cd (\nLv{2} + \nLv{3}) + (\nLv{1} + \nLv{3}) \cd (\nHv{2} + \nHv{3}) \cd \RL \cd \RH$$
$$ + (\nHv{1} + \nHv{3}) \cd (\nLv{2} + \nLv{3}) \cd \RL \cd \RH + (\nHv{1} + \nHv{3}) \cd (\nHv{2} + \nHv{3}) \cd \RH^2$$
Expanding out the righthand side gives: 
$$(\nLv{1} + \nLv{2} + \nLv{3}) \cd \nLv{3} \cd \RL^2 + (\nLv{1} + \nLv{2} + \nLv{3}) \cd \nHv{3} \cd \RL \cd \RH$$
$$(\nHv{1} + \nHv{2} + \nHv{3}) \cd \nLv{3} \cd \RH \cd \RL + (\nHv{1} + \nHv{2} + \nHv{3}) \cd \nHv{3} \cd \RH^2$$
The left and right side both have four terms. Focusing on the first term on each side, we can expand the lefthand side to get a coefficient on the $\RL^2$ term of: 
$$\nLv{1} \cd \nLv{2} + \nLv{1} \cd \nLv{3} + \nLv{3} \cd \nLv{2} + \nLv{3}^2$$
Expanding out the first term on the righthand side gives a coefficient of:  
$$\nLv{1} \cd \nLv{3} + \nLv{2} \cd \nLv{3} + \nLv{3}^2$$
The lefthand side contains every term on the righthand side, so it is greater than or equal to the righthand side. The inequality is strict whenever $\nLv{1}$ and $\nLv{2}$ are both strictly greater than 0. 

The fourth term on the lefthand side and the fourth term on the righthand side have a similar structure, and so the results are the same. The lefthand side is greater than or equal to the righthand side, with the inequality being strict whenever $\nHv{1}$ and $\nHv{2}$ are both strictly greater than 0. 

For the second and third terms on the left and righthand side, we expand and sum the terms. The lefthand side becomes: 
$$\nLv{1} \cd \nHv{2} + \nLv{1} \cd \nHv{3} + \nLv{3} \cd \nHv{2} + \nLv{3} \cd \nHv{3}$$
$$+ \nHv{1} \cd \nLv{2} + \nHv{1} \cd \nLv{3} + \nHv{3} \cd \nLv{2} + \nHv{3} + \nLv{3}$$
The righthand side becomes: 
$$\nLv{1} \cd \nHv{3} + \nLv{2} \cd \nHv{3} + \nLv{3} \cd \nHv{3}$$
$$+ \nHv{1} \cd \nLv{3} + \nHv{2} \cd \nLv{3} + \nHv{3} \cd \nLv{3}$$
All of the terms on the righthand side are also on the lefthand side, so the lefthand side is equal to or greater than the righthand side. The inequality is strict so long as either $\nLv{1} \cd \nHv{2}$ is strictly greater than 0 (both $\nLv{1}, \nHv{2}$ are strictly greater than 0) or $\nHv{1} \cd \nLv{2}>0$ (both $\nHv{1}, \nLv{2}$ are strictly greater than 0). 
So far, we have shown that 
$$\cost(S) + \cost(T) \geq \cost(S \cup T) + \cost(S \cap T)$$
and that the inequality is strict whenever at least one of the inequalities below holds: 
$$\nLv{1} \cd \nHv{2} > 0 \text{ or } \nHv{1} \cd \nLv{2} > 0$$
$$\text{ or } \nLv{1} \cd \nLv{2} > 0 \text{ or } \nHv{1} \cd \nHv{2} > 0$$
These conditions tell us that the inequality is strict whenever sets $S$ and $T$ both have strictly non-overlapping sections: neither is a subset of the other. 
\end{proof}

\evenstab*

\begin{proof}
For conciseness, in this proof we will drop the $\rL, \rH$ terms, which are constant throughout, so we write: 
$$\cost(\{\nL, \nH\}, \{\rL, \rH\}) = \cost(\nL, \nH)$$
First, we will show that the even-split pricing is decreasing in $\nL$. To do this, we can write the numerator as a telescoping sum equivalent to the cost of adding each term individually: 
$$\cost(\nL, \nH) = \sum_{j=1}^{\nH}\cost(0, i) - \cost(0, i-1)+\sum_{i=1}^{\nL}\cost(i, \nH) - \cost(i-1, \nH)$$
We claim that this is a sum of a decreasing sequence of terms. By submodularity, the sum over $j$ as the high risk players are added is decreasing. Also by submodularity, the sum over $i$ as the low risk players are added is decreasing. The last step we need to prove is that: 
$$\cost(1, \nH) -\cost(0, \nH)  < \cost(0, \nH) -\cost(0, \nH-1) $$
We can prove this by noting that, because $\rH > \rL$: $$\cost(1, \nH) -\cost(0, \nH) <\cost(0, \nH+1) -\cost(0, \nH)$$
$$<\cost(0, \nH) -\cost(0, \nH-1) $$
Then, we can view $\frac{\cost(\nL, \nH)}{\nL + \nH}$ as the average over a sequence of decreasing terms, which means that it is decreasing. 
Next, we will show that for any $\nH$ where the below inequality is satisfied, then the even-split price is decreasing in $\nH$:
$$\frac{\cost(\nL,\nH)}{\nL + \nH} \leq  \frac{\cost(\NL,0 )}{\NL }$$
To show this, first we can write out the numerator as a telescoping sum: 
$$\cost(\nL, \nH) = \sum_{i=1}^{\nL}\frac{\cost(\nL, 0)}{\nL}+\sum_{j=1}^{\nH}\cost(\nL, i) - \cost(\nL, i-1)$$
Note that $\frac{\cost(\nL, 0)}{\nL} \geq \frac{\cost(\NL, 0)}{\NL}  $ because $\nL \leq \NL$. 
Note that the sum over $j$ is a sequence of decreasing terms, again by submodularity. We know that the average of the entire sum is less than or equal to $\frac{\cost(\NL,0)}{\NL }$, which is equal to or smaller than each term within the $i$ sum over $\nL$. This must imply that there is at least one term in the sum over $j$ that is equal to or smaller than $\frac{\cost(\NL,0)}{\NL}$, which implies that the smallest term, $\cost(\nL, \nH) - \cost(\nL, \nH-1)$, must be at most $\frac{\cost(\NL,0)}{\NL }$. Because $\cost(\cd)$ is submodular:
$$\cost(\nL, \nH) - \cost(\nL, \nH-1)>\cost(\nL, \nH+1) - \cost(\nL, \nH)$$
and so increasing $\nH$ will add a new term that is smaller than any other term in the sum, thus decreasing the average.

These results, taken together, show the core-stability result: 
\begin{itemize}
    \item Players would not wish to go to any set $S$ with $\{\nL, \nH\}$ such that 
    $$\frac{\cost(\nL,\nH)}{\nL + \nH} \geq \frac{\cost(\NL,0)}{\NL}$$
    because this has equal or higher cost to the grand coalition. 
    \item For any set $S$ with the property that 
      $$\frac{\cost(\{\nL,\nH \}, \{\rL, \rH\})}{\nL + \nH} < \frac{\cost(\{\NL,0 \}, \{\rL,0\})}{\NL }$$
      the cost can be strictly decreased by increasing $\nH$, which would imply that the grand coalition has lower cost. 
    \item For any set where $\nH = \NH$ and so cannot be increased, we know that we can strictly decrease the cost by increasing $\nL$, again implying that the grand coalition has lower cost. 
\end{itemize}

\end{proof}

\maxsubpref*

\begin{proof}
For conciseness, in this proof we will drop the $\rL, \rH$ terms, which are constant throughout, so we write: 
$$\cost(\{\nL, \nH\}, \{\rL, \rH\}) = \cost(\nL, \nH)$$
First, we wish to show that the function below decreases as $\nL$ increases: 
$$ \frac{\cost(\nL,0)}{\nL}$$
We can rewrite the total cost as the sum of the marginal costs: 
$$\cost(\nL,0) = \sum_{i=1}^{\nL}\br{\cost(i,0) - \cost(i-1,0)}$$
An equivalent definition of submodularity is: 
$$\cost(S \cup \{j\}) - \cost(S) \geq \cost(T \cup \{j\}) - \cost(T) \text{ for } S \subset T \subset N\backslash \{j\}$$
which implies that the sum of terms above are decreasing as $i$ increases. We can then view $\frac{\cost(\nL,0)}{\nL}$ as the average of a sequence of decreasing numbers, which decreases as $\nL$ increases. 

Next, we wish to show that the price high risk players pay decreases with both $\nL$ and $\nH$. To show that this is true for $\nL$, we first note that by the definition of submodularity, 
$$\cost(\nL,\nH) - \cost(\nL,0) > \cost(\nL+1,\nH) - \cost(\nL+1,0)$$
So as $\nL$ increases, the numerator of high risk player's cost decreases and the denominator stays the same, so the high risk player's overall cost decreases. 
Next, we will show that the high risk player's cost is decreasing in $\nH$. For conciseness, we define: 
$$h(\nL,\nH) = \cost(\nL,\nH) - \cost(\nL,0)$$
which is a function representing the marginal cost of the $\nH$ high risk players. Note that we can write:
$$h(\nL,\nH) =h(\nL,\nH) - h(\nL,0) $$
which is true because: 
$$h(\nL,0) = \cost(\nL,0) - \cost(\nL,0 )=0$$
Using the same trick as before, we can then write that this is equivalent to: 
$$= \sum_{i=1}^{\nH}\br{h(\nL,i) - h(\nL,i-1)}$$
The difference $h(\nL,i) - h(\nL,i-1)$ represents the marginal cost of the $i$th high risk player. Because $\cost(\cd)$ is a submodular function, this marginal cost is decreasing. Then, the price high risk players pay is $\frac{h(\nL,i)}{\nH}$, which we can view as an average of a sequence of decreasing numbers, which decreases as $\nH$ increases. 
\end{proof}

\end{document}